\numberwithin{equation}{section}
\definecolor{NoteColor}{rgb}{1,0,0}
\theoremstyle{plain}
\newtheorem{theorem}{Theorem}%[section]
\newtheorem{lemma}[theorem]{Lemma}
\newtheorem{cor}[theorem]{Corollary}
\newtheorem{proposition}[theorem]{Proposition}
\theoremstyle{definition}
\theoremstyle{remark}
\newcommand{\R}{\mathbb{R}}
\newcommand{\del}{\partial}
\begin{document}

%\title{Topological Classification of Stationary Vacuum Black Holes with bi-axisymmetry in 5 %Dimensions}

\title[Plumbing Constructions and the Domain of Outer Communication]{Plumbing Constructions and the Domain of Outer Communication for 5-Dimensional Stationary Black Holes}

\author[Khuri]{Marcus Khuri}
\address{Department of Mathematics\\
Stony Brook University\\
Stony Brook, NY 11794, USA}
\email{khuri@math.sunysb.edu}

\author[Matsumoto]{Yukio Matsumoto}
\address{Department of Mathematics\\
Gakushuin University\\
Tokyo 171-8588, Japan}
\email{yukiomat@math.gakushuin.ac.jp}

\author[Weinstein]{Gilbert Weinstein}
\address{Physics Department and Department of Mathematics\\
Ariel University\\
Ariel, 40700, Israel}
\email{gilbertw@ariel.ac.il}

\author[Yamada]{Sumio Yamada}
\address{Department of Mathematics\\
Gakushuin University\\
Tokyo 171-8588, Japan}
\email{yamada@math.gakushuin.ac.jp}

\thanks{M. Khuri acknowledges the support of NSF Grant DMS-1708798. S. Yamada acknowledges the support of JSPS Grants KAKENHI 24340009 and 17H01091.}

\begin{abstract}
The topology of the domain of outer communication for 5-dimensional stationary bi-axisymmetric black holes is classified in terms of disc bundles over the 2-sphere and plumbing constructions. In particular we find an algorithmic bijective correspondence between the plumbing of disc bundles and the rod structure formalism for such spacetimes.
Furthermore, we describe a canonical fill-in for the black hole region and cap for the asymptotic region. The resulting compactified domain of outer communication is then shown to be homeomorphic to $S^4$, a connected sum of $S^2\times S^2$'s, or a connected sum of complex projective planes $\mathbb{CP}^2$. Combined with recent existence results, it is shown that all such topological types are realized by vacuum solutions. In addition, our methods treat all possible types of asymptotic ends, including spacetimes which are asymptotically flat, asymptotically Kaluza-Klein, or asymptotically locally Euclidean.
\end{abstract}
\maketitle

\section{Introduction}
\label{sec1} \setcounter{equation}{0}
\setcounter{section}{1}

In the classical 4-dimensional setting, the topology of horizon cross sections as well as the domain of outer communication for stationary asymptotically flat black holes is unique up to the number of horizon components, assuming appropriate energy conditions. Namely, Hawking's theorem \cite{Hawking,HawkingEllis} states that cross sections of the event horizon must be 2-spheres and topological censorship \cite{FriedmanSchleichWitt} combined with the positive resolution of Poincar\'{e}'s conjecture imply that the domain of outer communication must be the complement of a number of 3-balls in Euclidean space $\mathbb{R}\times \left(\mathbb{R}^3\setminus \cup_{i} B_{i}^3\right)$. In higher dimensions stationary black holes can have a variety of topologies for their horizon cross sections \cite{EmparanReall},
although each component must be of positive Yamabe type \cite{Galloway1} under proper energy assumptions. Moreover, relatively little is known about the domain of outer communication (DOC) \cite{HollandsIshibashi}. In this paper we restrict attention to the case of spacetime dimension 5. According to the Rigidity Theorem \cite{HollandsIshibashi0,HollandsIshibashiWald,MoncriefIsenberg}, generically a stationary solution must have at least one additional Killing field corresponding to a rotation. In fact, all known solutions in this dimension have two rotational symmetries, and we will therefore assume throughout that the symmetry group for the spacetime is $\mathbb{R}\times U(1)^2$. For such spacetimes satisfying the null energy condition, the list of possible horizon cross-sectional topologies is restricted to $S^3$, $S^1\times S^2$, and the lens spaces $L(p,q)$. Existence results for harmonic maps with prescribed singularities \cite{KhuriWeinsteinYamada,KhuriWeinsteinYamada1} have been applied to obtain vacuum solutions possessing each of these possible horizon topologies, in addition to various types of asymptotic structures, although the issue of (geometric) conical singularities is still open for the black lenses. The purpose of this current work is to classify the topologies of the DOC for these solutions, as well as those for other theories which respect the null energy condition.

Let $\mathcal{M}^5$ be the DOC of an orientable stationary bi-axisymmetric spacetime on which matter fields satisfy the null energy condition. It is also assumed that the stationary Killing field $\partial_\tau$ has complete orbits, and the DOC is globally hyperbolic having a Cauchy surface whose boundary is a compact cross-section of the event horizon.
Then $\mathcal{M}^5=\mathbb{R}\times M^4$ where the Cauchy surface $M^4$ is given by the $\tau=0$ slice. Various types of asymptotic ends will be considered, and their topology will be denoted by $M^4_{\text{end}}$. In particular the Cauchy surface may have an end which is asymptotically flat, asymptotically Kaluza-Klein, or asymptotically locally Euclidean which is homeomorphic to $\mathbb{R}_{+}\times S^3$, $\mathbb{R}_{+}\times S^1\times S^2$, or $\mathbb{R}_{+}\times L(p,q)$ respectively. Geometrically asymptotically cylindrical ends may also be present, as is the case with degenerate horizons. In this situation, as above, cross-sections of the cylindrical ends may take anyone of the three types of horizon topologies.

The orbit space $\mathcal{M}^{5}/[\mathbb{R}\times U(1)^2]$ is homeomorphic to the right-half plane $\{(\rho,z)\mid \rho\geq 0\}$ \cite{HollandsYazadjiev1} where the $z$-axis encodes nontrivial aspects of the topology. This result relies on the
topological censorship theorem \cite{ChruscielGallowaySolis,FriedmanSchleichWitt,
GallowaySchleichWittWoolgar,GallowaySchleichWittWoolgar1}, which in turn assumes
the null energy condition; it is for this reason that the null energy condition
is listed among the hypotheses in the current work.
The functions $\rho$ and $z$ are part of the global system of Weyl-Papapetrou coordinates $(\tau,\phi^1,\phi^2,\rho,z)$ which parameterize the DOC, where $\phi^a$ are $2\pi$-periodic and $\partial_{\phi^a}$, $a=1,2$ generate the $U(1)^2$ symmetry. The $z$-axis is divided into a sequence of intervals referred to as \textit{rods}
\begin{equation}
\Gamma_{1}=[z_{1},\infty),\text{ }\Gamma_{2}=[z_2,z_1],\text{ }\ldots,\text{ }
\Gamma_{L}=[z_{L},z_{L-1}],\text{ }\Gamma_{L+1}=(-\infty,z_{L}],
\end{equation}
and for each rod there is an associated \textit{rod structure} $(m_l,n_l)$ consisting of two integers having the property that the Killing field
\begin{equation}
m_l \partial_{\phi^1}+n_l\partial_{\phi^2}
\end{equation}
vanishes along $\Gamma_l$. If the rod structure $(m_l,n_l)=(0,0)$ then $\Gamma_l$ is called a \textit{horizon rod}, otherwise it is called an \textit{axis rod}. A point that separates two axis rods is a \textit{corner} if both generators of the Killing symmetry vanish there, otherwise it is a \textit{horizon puncture} and neither generator vanishes at that point. Horizon punctures may be taken to represent components of a degenerate horizon cross-section. In order to avoid orbifold singularities, the following condition is imposed on the determinant of neighboring rod structures surrounding a corner
\begin{equation}\label{det condition}
\begin{vmatrix}
m_l & m_{l+1} \\
n_l & n_{l+1} \end{vmatrix} = \pm 1.
\end{equation}
This ensures that a neighborhood of the corner in $M^4$ is homeomorphic to the 4-ball $B^4$ \cite{HollandsYazadjiev,KhuriWeinsteinYamada}.

It will be shown that certain neighborhoods of individual axis rods are topologically twisted disc bundles over the 2-sphere. Such fiber bundles will be denoted by $\xi$, and are classified by an integer $-k$ which represents the self-intersection number of the zero-section. These 4-manifolds are simply connected and have lens space boundary $\partial\xi=L(k,1)$ if $k\neq 0$. They will play the role of building blocks in the topological classification of the DOC. A consecutive sequence of axis rods gives rise to a neighborhood which may be identified with the 4-manifold obtained by plumbing together all of the associated disc bundles $\{\xi_{i}\}_{i=1}^{I}$. The plumbed manifold $\mathcal{P}(\xi_1,\cdots,\xi_{I})$ is again simply connected and has a lens space boundary $L(p,q)$, where $p$ and $q$ are computed in terms of the self-intersection numbers $\{-k_i\}_{i=1}^{I}$. Note that in terms of the plumbing notation $\mathcal{P}(\xi)=\xi$. The plumbing constructions will be described in detail in Section \ref{sec2}.

Our first main theorem provides a decomposition of the Cauchy surface into building blocks. These consist of plumbing constructions, the asymptotic end, 4-dimensional balls, and products of a disc with a cylinder.

\begin{theorem}\label{thm1}
The topology of the domain of outer communication of an orientable stationary bi-axisymmetric spacetime satisfying the null energy condition is $\mathcal{M}^5=\mathbb{R}\times M^4$ with the Cauchy surface given by a union of the form
\begin{equation}\label{1}
M^4=\cup_{j=1}^{J}\mathcal{P}\left(\xi_{1,j},\cdots,\xi_{I_j,j}\right)
\cup_{n=1}^{N_1}C_{n}^{4}\cup_{m=1}^{N_2}B_{m}^{4}\cup M_{\text{end}}^4,
\end{equation}
in which each constituent is a closed manifold with boundary and all are mutually disjoint expect possibly at the boundaries.
Each disc bundle $\xi_{i,j}$ is associated to an axis rod $\Gamma_{i,j}$ which is flanked on both sides by axis rods $\Gamma_{(i-1),j}$ and $\Gamma_{(i+1),j}$, $B_{m}^4$ is a 4-ball,  $C^{4}_{n}$ is $D^2\times S^1\times [0,1]$, and $M_{\text{end}}^4$ is either $\mathbb{R}_{+}\times S^3$, $\mathbb{R}_{+}\times S^1\times S^2$, or $\mathbb{R}_{+}\times L(p,q)$ depending on whether the spacetime is asymptotically flat, asymptotically Kaluza-Klein, or asymptotically locally Euclidean.
The value $J+N_2-1$ coincides with the number of connected components of the $z$-axis having at least one corner after horizon rods/punctures have been removed, $N_1$ is the number of single axis rods bounded by a horizon rod/puncture, and $N_2$ is the number of two consecutive axis rods which are bounded on either side by a horizon rod, horizon puncture, or the asymptotic end.  Moreover, the self-intersection number of the zero-section for the disc bundle $\xi_{i,j}$ is computed by
\begin{equation}\label{1.1}
-k_{i,j}=
\begin{vmatrix}
m_{(i-1),j}  & m_{i,j}     \\
 n_{(i-1),j}  & n_{i,j}    \\
\end{vmatrix}
\begin{vmatrix}
m_{i,j}  &  m_{(i+1),j}     \\
n_{i,j}    &  n_{(i+1),j} \\
\end{vmatrix}
\begin{vmatrix}
 m_{(i+1),j}  &  m_{(i-1),j}     \\
n_{(i+1),j}    &  n_{(i-1),j}  \\
\end{vmatrix},
\end{equation}
where $(m_{i,j},n_{i,j})$ denotes the rod structure for $\Gamma_{i,j}$.
\end{theorem}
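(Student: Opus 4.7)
The plan is to reduce the topological classification to an analysis of the orbit space $M^4/U(1)^2$, which is homeomorphic to the right half-plane, and to assemble $M^4$ from local models over each region of the $z$-axis. Over interior points $\{\rho>0\}$ the projection realizes $M^4$ as a trivial $T^2$-bundle; over the interior of an axis rod $\Gamma_l$ exactly one circle collapses, yielding a local neighborhood $D^2\times S^1\times I$; over a corner both Killing generators vanish and the determinant condition (\ref{det condition}), combined with the local model established in \cite{HollandsYazadjiev,KhuriWeinsteinYamada}, gives a $B^4$ neighborhood; over a horizon rod or puncture the collapse realizes the corresponding portion of the horizon boundary of $M^4$. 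Patching these local models across the $z$-axis reconstructs $M^4$ up to an asymptotic collar, which one identifies with $M^4_{\text{end}}$ of the appropriate topology.

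The key structural step is to prove that the neighborhood of an axis rod $\Gamma_{i,j}$ flanked on both sides by axis rods is diffeomorphic to a disc bundle $\xi_{i,j}$ over $S^2$. Parametrize $\Gamma_{i,j}$ by an interval and sweep out the $S^1$-orbit transverse to the collapsing Killing direction along $\Gamma_{i,j}$; the resulting cylinder pinches to a point at each endpoint because the condition $|m_{i,j}n_{(i\pm 1),j}-m_{(i\pm 1),j}n_{i,j}|=1$ forces the transverse orbit to degenerate at the adjacent corner. This produces the base $S^2$ as the zero section, and the fiber $D^2$ is the transverse disc rotated by $m_{i,j}\partial_{\phi^1}+n_{i,j}\partial_{\phi^2}$.

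To extract the self-intersection number $-k_{i,j}$, I would analyze the clutching function of this disc bundle along the equator of the base $S^2$. At each pole (corner) choose an $SL(2,\mathbb{Z})$ change of basis that sends the two adjacent rod vectors to the standard basis of $\mathbb{Z}^2$; this trivializes the bundle over the corresponding hemisphere. Composing the two hemispheric trivializations with the identification induced by the rod vector $(m_{i,j},n_{i,j})$ of the zero section yields a transition map on the equator whose rotation number is exactly the product of the three $2\times 2$ determinants in (\ref{1.1}); a direct matrix computation in $SL(2,\mathbb{Z})$ confirms the formula, and a test on $v_1=(1,0), v_2=(0,1), v_3=(-1,k)$ recovers self-intersection $-k$ as expected.

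Finally, I would assemble the global decomposition. For a component consisting of at least three consecutive axis rods, neighboring disc bundles share a common corner whose $B^4$-neighborhood equals $D^2\times D^2$ with base and fiber discs exchanged — this is precisely the local picture of plumbing recalled in Section \ref{sec2} — so the full neighborhood equals $\mathcal{P}(\xi_{1,j},\ldots,\xi_{I_j,j})$. A component with only a single axis rod has no corner and contributes a cylindrical piece $C^4=D^2\times S^1\times[0,1]$; a two-rod component has a single corner whose $B^4$-neighborhood absorbs both endpoint half-rods. Collecting these building blocks together with the asymptotic collar produces the decomposition (\ref{1}) and matches the stated counts of $N_1$, $N_2$, and $J$ (the shift by $-1$ accounting for the component whose outer end is absorbed into $M^4_{\text{end}}$). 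The main obstacle is the Euler-number computation (\ref{1.1}): a naive gluing misses one factor, and one must simultaneously track the two corner basis changes and the framing provided by the zero section; the remainder of the theorem then follows by iterated gluing and the plumbing dictionary of Section \ref{sec2}.
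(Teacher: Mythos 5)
Your proposal is correct and follows essentially the same route as the paper: the same orbit-space local models, the identification of a doubly-flanked axis rod neighborhood as a disc bundle over $S^2$ (Lemma \ref{lemma1}), plumbing via the base--fiber exchange in the corner $B^4$ (Lemma \ref{lemma2}), and the same assembly into plumbing blocks, $C^4$ and $B^4$ pieces, and the asymptotic end collar. The only cosmetic difference is that you verify the self-intersection formula \eqref{1.1} by a clutching-function computation with corner-wise $SL(2,\mathbb{Z})$ normalizations, whereas the paper obtains it from the torus-action bookkeeping of Proposition \ref{prop2}; both amount to the same normal-form check on $(1,0),(0,1),(-1,k)$.
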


While this result identifies the fundamental constituents of the DOC along with an algorithmic method for computing them, it does not express the topology in a concise way. A simplified expression may be obtained by filling in the horizons with canonically chosen simply connected compact 4-manifolds, and similarly capping off the asymptotic end to obtain a compactified manifold without boundary. Since this manifold is simply connected, the work of Freedman \cite{FreedmanQuinn} and Donaldson \cite{DonaldsonKronheimer} yields a classification of the `compactified DOC'. The procedure for filling in a horizon or capping off an asymptotic end is algorithmic as well, and consists of the plumbing of a finite number of disc bundles over $S^2$. This plumbing construction is naturally associated with a set of subrod structures for rods which may be thought of as existing within the black hole region or at infinity. The disc bundles used to fill in a particular horizon or end are determined by a continued fraction expansion arising from the two rod structures bounding the horizon rod/puncture or end, in that elements of the continued fraction are precisely the self-intersection numbers for the disc bundles. Furthermore, from these self-intersection numbers the desired rod structures may be computed inductively.

\begin{theorem}\label{thm2}
Consider the domain of outer communication $\mathcal{M}^5=\mathbb{R}\times M^4$ of an orientable stationary bi-axisymmetric spacetime satisfying the null energy condition, with $H$ horizon cross-sectional components. There exists a choice of horizon fill-ins $\{\tilde{M}_{h}^{4}\}_{h=1}^{H}$ and a cap for the asymptotic end $\tilde{M}_{\text{end}}^{4}$, each of which is either a 4-ball $B^4$ or a plumbed finite sequence of disc bundles over the 2-sphere $\mathcal{P}(\xi_{h_1},\cdots,\xi_{h_I})$, such that the compactified Cauchy surface
\begin{equation}
\tilde{M}^4=\left(M^4\setminus M^{4}_{\text{end}}\right)\cup_{h=1}^{H}\tilde{M}_{h}^{4}\cup
\tilde{M}^{4}_{\text{end}}
\end{equation}
is homeomorphic to the sphere $S^4$, a connected sum of 2-sphere products $\# m S^2 \times S^2$, or a connected sum of complex projective planes $\left(\# n\mathbb{CP}^2\right) \#
\left(\# \ell \overline{\mathbb{CP}}^2\right)$.
Moreover, the disc bundles for each fill-in and cap may be computed algorithmically from the neighboring rod structures of each horizon and the asymptotic end.
\end{theorem}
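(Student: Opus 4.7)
The plan is to construct the horizon fill-ins and the asymptotic cap explicitly by extending the rod-structure formalism into the black hole and asymptotic regions, then verify that the compactified Cauchy surface $\tilde{M}^4$ is a closed simply connected topological $4$-manifold, and finally invoke Freedman's classification together with a direct analysis of the intersection form to conclude.

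For each horizon rod or puncture $\Gamma_h$ flanked by axis rod structures $v_- = (m_-, n_-)$ and $v_+ = (m_+, n_+)$, the cross-section of the horizon is the lens space $L(p,q)$ determined by the pair $v_\pm$. I would insert a finite sequence of auxiliary rod structures $v_- = w_0, w_1, \ldots, w_{I+1} = v_+$ satisfying the neighboring determinant condition \eqref{det condition} and a recursion of the form $w_{i-1} + w_{i+1} = k_i w_i$ for integers $k_i \geq 2$. Such a sequence is produced by the Hirzebruch--Jung continued fraction expansion associated to $(v_-, v_+)$ and corresponds classically to the resolution of the cyclic quotient singularity of type $L(p,q)$. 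The fill-in $\tilde{M}_h^4$ is then the plumbing $\mathcal{P}(\xi_{h_1}, \ldots, \xi_{h_I})$ of disc bundles with self-intersection numbers $-k_i$ computed via \eqref{1.1}, which by the discussion preceding Theorem \ref{thm1} is simply connected with lens space boundary matching the horizon cross-section; when $|\det(v_-, v_+)| = 1$ the horizon is $S^3$ and one takes $\tilde{M}_h^4 = B^4$. The asymptotic cap $\tilde{M}_{\text{end}}^4$ is built by the same algorithm applied to the two semi-infinite axis rod structures bounding $M^4_{\text{end}}$, with minor case-adjustments to accommodate the $\mathbb{R}_+ \times S^1 \times S^2$ and $\mathbb{R}_+ \times L(p,q)$ profiles (in the Kaluza--Klein case the trivial disc bundle $D^2 \times S^2$ already has the correct boundary).

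Gluing these pieces to $M^4 \setminus M^4_{\text{end}}$ along their matching lens space boundaries produces a closed topological $4$-manifold $\tilde{M}^4$. Simple connectedness follows from a van Kampen argument that runs along the $z$-axis: each constituent from Theorem \ref{thm1} as well as each fill-in and cap is simply connected, and the loops generated by circles around horizon rods and around the asymptotic end are killed by the attached pieces. Freedman's theorem then reduces the homeomorphism classification of $\tilde{M}^4$ to its intersection form $Q$, since the Kirby--Siebenmann invariant vanishes by smoothability of the plumbing. The form $Q$ is read off from the cyclic plumbing graph obtained by joining all linear chains of subrods through the fill-ins and cap: it is a cyclic tridiagonal unimodular symmetric form with diagonal entries $-k_{i,j}$ and off-diagonal entries $\pm 1$.

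The last and most delicate step is showing that every such $Q$ is equivalent over $\mathbb{Z}$ either to the zero form, to a sum of hyperbolic planes $\oplus H$, or to a diagonal form, which by Freedman yields $S^4$, $\# m (S^2 \times S^2)$, or $(\# n \mathbb{CP}^2) \# (\# \ell \overline{\mathbb{CP}}^2)$, respectively. The main obstacle here is ruling out $E_8$ summands, which would otherwise produce homeomorphism types outside the stated list. This would be carried out by an inductive sequence of handle slides and blow-downs on the cyclic plumbing: whenever a diagonal entry equals $-1$ the corresponding vertex can be blown down and its neighbors adjusted accordingly; whenever an entry equals $0$ the associated bundle splits off an $S^2 \times S^2$ summand; and in the remaining cases a tracking of the parity of the diagonal entries through these moves shows that $Q$ is either diagonalizable when odd, or splits as a sum of hyperbolic planes when even. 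This completes the identification of $\tilde{M}^4$ among the three listed homeomorphism types, and by construction the disc bundles used at each step are read off algorithmically from the neighboring rod structures via \eqref{1.1}.
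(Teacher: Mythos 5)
Your construction of the fill-ins and cap is essentially the paper's: after an $SL(2,\mathbb{Z})$ normalization of the two bounding rod structures, the continued fraction expansion of $p/q$ supplies the self-intersection numbers $-k_i$, the corresponding plumbing $\mathcal{P}(\xi_1,\cdots,\xi_\ell)$ has boundary $L(p,q)$, and the associated subrod structures are spliced into the rod diagram in place of the horizon rod/puncture or the end (this is Lemma \ref{lemma3}); simple connectivity of $\tilde{M}^4$ then follows by van Kampen exactly as in the paper. The divergence, and the gap, is in the final classification step. The paper's primary argument is not Freedman plus intersection-form algebra: it observes that the fill-ins and cap carry compatible $U(1)^2$-actions, so $\tilde{M}^4$ is a closed simply connected 4-manifold with an effective $T^2$-action whose orbit space is a disc with the extended rod structure on its boundary, and then quotes Orlik--Raymond to conclude that $\tilde{M}^4$ is $S^4$ or a connected sum of $S^2\times S^2$, $\mathbb{CP}^2$, $\overline{\mathbb{CP}}^2$. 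The paper's alternative route (Freedman--Donaldson) still requires ruling out $E_8$ summands, for which it cites the arguments of Hollands--Holland--Ishibashi and Hollands--Yazadjiev.

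Your proposal replaces this geometric input with a purely combinatorial claim: that the ``cyclic tridiagonal'' form can always be reduced by blow-downs and zero-splittings, and that ``parity tracking'' shows the residual form is diagonal (odd case) or a sum of hyperbolic planes (even case). That assertion is precisely the hard point and it cannot be established by algebra of this kind alone: even indefinite unimodular forms are classified as $aH\oplus b(\pm E_8)$, and $b\neq 0$ does occur for smooth simply connected 4-manifolds (the K3 surface has form $2(-E_8)\oplus 3H$), so excluding $E_8$ summands genuinely requires either the torus symmetry (Orlik--Raymond; K3 admits no effective $T^2$-action) or gauge-theoretic input (Donaldson/Furuta, as in the cited works). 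Your blow-down/parity sketch never uses the rod-structure determinant constraints beyond the tridiagonal shape, and in particular gives no argument for the negative-definite chains with all $k_i\geq 2$ produced by the fill-ins, where no $-1$ vertex is available to blow down. There is also a smaller inaccuracy: the full cyclic matrix indexed by all axis rods is not the intersection form of $\tilde{M}^4$ --- the sphere classes over the rods satisfy two relations, so the form has rank two less than the number of rods, and unimodularity must be argued for the reduced form, not asserted for the cyclic matrix. To close the gap you should either invoke Orlik--Raymond's classification of simply connected torus 4-manifolds, as the paper does, or carry out the $E_8$-exclusion along the lines of Hollands--Holland--Ishibashi rather than by parity bookkeeping.
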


This may be considered a direct generalization of the corresponding statement in $D=4$ given in the first paragraph, where the compactified space is $S^3$.
A similar result was established by Hollands et al. \cite{HollandsHollandIshibashi,HollandsYazadjiev1} in the asymptotically flat and asymptotically Kaluza-Klein cases with nondegenerate horizons.  Their version of the compactified manifold $\tilde{M}^4$ is classified topologically as either $S^4$ or $\left(\# m S^2 \times S^2\right) \# \left(\# n\mathbb{CP}^2\right) \#
\left(\# \ell \overline{\mathbb{CP}}^2\right)$.  Here $\overline{\mathbb{CP}}^2$ is the complex projective plane with opposite orientation to $\mathbb{CP}^2$.
Therefore Theorem \ref{thm2} may be considered as a refinement of their result.
%Note that the complex projective spaces may be eliminated from the list of
%possibilities if it is assumed that the spacetime is spin.
In addition, it should be pointed out that our method for filling in horizons is different from that in \cite{HollandsHollandIshibashi,HollandsYazadjiev1}, since for instance we obtain different compactified DOCs for the single component black ring. Namely, the procedure of \cite{HollandsHollandIshibashi,HollandsYazadjiev1} produces $S^4$ whereas our method yields $S^2\times S^2$ for $\tilde{M}^4$ in the case of asymptotically flat black rings. This example and others will be described in detail in Section \ref{sec4}. Furthermore an important contribution of Theorem \ref{thm2}, which separates it from previous results, is the introduction of an algorithm for computing the topology of the DOC. Finally, we note that simple connectivity of the compactified DOC is consistent with
topological censorship \cite{ChruscielGallowaySolis,FriedmanSchleichWitt,
GallowaySchleichWittWoolgar,GallowaySchleichWittWoolgar1}.

It is a natural question to ask, which of the topologies for the compactified manifold $\tilde{M}^4$ described in Theorem \ref{thm2} can be realized by stationary vacuum solutions. Previously, very few examples were known.
In fact, in \cite[pg. 18]{HollandsIshibashi} it was commented that in all known solutions only $S^4$ arises. However, recently progress has been made with regards to the existence question for bi-axisymmetric solutions of the stationary vacuum equations having a variety of asymptotic ends. In \cite{KhuriWeinsteinYamada,KhuriWeinsteinYamada1}, existence results for harmonic maps with prescribed singularities have been utilized to construct bi-axisymmetric stationary vacuum spacetimes in 5-dimensions having arbitrary rod structures modulo
mild compatibility conditions. In particular, combining these existence results with Theorem \ref{thm2} answers the question posed above.

\begin{theorem}\label{thm3}
Each of the topologies listed in Theorem \ref{thm2} for the compactified Cauchy surface $\tilde{M}^4$ is realized by a solution of the 5-dimensional bi-axisymmetric stationary vacuum Einstein equations.
\end{theorem}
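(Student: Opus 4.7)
The plan is to combine Theorem \ref{thm2} with the existence results for harmonic maps with prescribed singularities established in \cite{KhuriWeinsteinYamada, KhuriWeinsteinYamada1}. The strategy is constructive: for each topological type appearing in the conclusion of Theorem \ref{thm2}, I would exhibit a sequence of rod structures $(m_l,n_l)$ satisfying the admissibility condition (\ref{det condition}) whose associated compactified DOC is precisely the prescribed manifold, and then invoke the existence theorems to produce a bi-axisymmetric stationary vacuum solution realizing those rod data. Since Theorem \ref{thm2} computes the compactified topology algorithmically from the rod structure via formula (\ref{1.1}) together with canonical fill-ins and caps, it suffices to engineer rod data producing any prescribed sequence of self-intersection numbers for the resulting plumbed disc bundles.

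For the base cases I would use standard examples: $S^4$ is realized by the Myers-Perry rod data consisting of two semi-infinite axis rods with rod structures $(1,0)$ and $(0,1)$ separated by a single horizon rod, while $S^2\times S^2$ arises from the asymptotically flat black ring, whose algorithm yields a trivial disc bundle after fill-in and cap. To obtain $\#m(S^2\times S^2)$, I would chain together $m$ configurations, each contributing a zero self-intersection bundle to the plumbing, by interleaving axis rods whose neighbouring $2\times 2$ determinants, computed via (\ref{1.1}), vanish. To obtain $\left(\# n\mathbb{CP}^2\right)\#\left(\# \ell \overline{\mathbb{CP}}^2\right)$, I would arrange rod structures so that (\ref{1.1}) produces self-intersection numbers $-1$ (respectively $+1$), each such disc bundle contributing a $\mathbb{CP}^2$ (respectively $\overline{\mathbb{CP}}^2$) summand in the plumbing by standard 4-manifold theory. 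Black lens horizons $L(p,q)$ whose canonical fill-ins are themselves plumbings of disc bundles determined by the continued fraction expansion of $p/q$ give additional flexibility, as do the asymptotically Kaluza-Klein and asymptotically locally Euclidean ends. The admissibility check (\ref{det condition}) at each corner is a direct linear algebra verification on the neighbouring rod vectors, and the freedom to act by $SL(2,\mathbb{Z})$ at each corner makes all required configurations attainable.

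With the rod data in hand, the final step is to apply the harmonic map existence theorems of \cite{KhuriWeinsteinYamada, KhuriWeinsteinYamada1}, which assert that any rod structure satisfying the admissibility and mild compatibility conditions is realized by a harmonic map from the orbit space into the symmetric target space corresponding to the 5-dimensional vacuum Einstein equations; such a harmonic map determines a bi-axisymmetric stationary vacuum metric on $\mathcal{M}^5$ whose rod data agree with the prescribed ones, so that Theorem \ref{thm2} identifies its compactified Cauchy surface with the desired topology. The principal obstacle I anticipate is the combinatorial task of realizing every triple $(m,n,\ell)$ by rod data while simultaneously meeting the admissibility condition at every corner and the compatibility hypotheses of the existence theorems, which will be resolved by exploiting the $SL(2,\mathbb{Z})$ freedom in choosing rod vectors together with the ability to insert auxiliary axis rods between horizons to adjust individual self-intersection numbers without disturbing those already fixed.
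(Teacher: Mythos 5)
Your proposal is correct and follows essentially the same route as the paper: the paper proves Theorem \ref{thm3} precisely by exhibiting admissible rod data realizing each topology via the algorithm of Theorem \ref{thm2} (the alternating $(1,0),(0,1)$ chains and determinant-$\pm1$ configurations illustrated in Section \ref{sec4}) and then invoking the harmonic map existence results of \cite{KhuriWeinsteinYamada,KhuriWeinsteinYamada1} to produce vacuum solutions with those rod structures. The only caveat is a wording slip: in your $\# m\, S^2\times S^2$ construction it is the determinant of the two \emph{flanking} rod structures in \eqref{1.1} that must vanish, while the adjacent-pair determinants are forced to be $\pm 1$ by the admissibility condition \eqref{det condition}.
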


The solutions produced in \cite{KhuriWeinsteinYamada,KhuriWeinsteinYamada1} are given in terms of abstract existence results, and it is not immediately clear which of these solutions are absent of conical singularities. It is known, however, that conical singularities are not present on the two semi-infinite rods. Although the issue of conical singularities is relevant for physics and geometry, it plays no role in the topological classification. In particular, we conjecture that
any of the solutions produced in \cite{KhuriWeinsteinYamada,KhuriWeinsteinYamada1} can be perturbed to give smooth initial data, devoid of any conical singularity, with the same outermost apparent horizon topology. It would be of interest to analyze the DOCs for the evolutions of such data.

A basic question posed in the literature \cite{AlaeeKunduriPedroza} is the following. Does the topology of horizon cross-sections and the asymptotic end uniquely determine the topology of the domain of outer communication for stationary (vacuum) black holes in 5-dimensions? As a consequence of Theorem 3 we are able to answer this question. An example illustrating the answer is given in Section \ref{sec4}.

\begin{cor}
The topology of the domain of outer communication of a 5-dimensional stationary vacuum bi-axisymmetric black hole is not uniquely determined by the horizon cross-sectional topology and the topology of the asymptotic end. In particular, there exist two asymptotically flat black $\mathbb{RP}^3$'s having topologically different DOCs.
\end{cor}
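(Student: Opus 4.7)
The plan is to exhibit two explicit rod structures realizing asymptotically flat stationary vacuum bi-axisymmetric 5-dimensional black holes, both with a single horizon of cross-section $\mathbb{RP}^3 = L(2,1)$, and to distinguish the resulting DOCs via the second Betti number of the compactified Cauchy surface. Consider
\[
\text{Example A:}\ (1,0),\ (0,0)_H,\ (1,2),\ (0,1),
\qquad
\text{Example B:}\ (1,0),\ (1,1),\ (0,0)_H,\ (1,3),\ (0,1).
\]
A direct determinant calculation verifies \eqref{det condition} at every corner in both cases, and the axis rods flanking each horizon rod have determinant of absolute value $2$, identifying the horizon cross-section as $L(2,1)=\mathbb{RP}^3$ in both examples. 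The semi-infinite rods $(1,0)$ and $(0,1)$ produce the asymptotically flat end, and Theorem~\ref{thm3} supplies a stationary vacuum realization of each rod datum.

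Next I would apply Theorem~\ref{thm2} to construct the compactified Cauchy surface $\tilde M^4$. Since $2/1$ has the unique continued-fraction expansion $[2]$, the canonical fill-in of each $\mathbb{RP}^3$ horizon is a single disc bundle over $S^2$ of self-intersection $-2$; at the level of rod data this simply inserts one new axis rod between the two that bound the horizon, namely $(1,1)$ in Example A and $(1,2)$ in Example B. The asymptotic end is capped with $B^4$, contributing one additional corner at infinity between the two semi-infinite rods. The compactified orbit polygon therefore has $4$ corners in Example A and $5$ corners in Example B. Since these corners are precisely the isolated fixed points of the induced $T^2$-action, $\chi(\tilde M^4_A)=4$ and $\chi(\tilde M^4_B)=5$. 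Simple connectedness of $\tilde M^4$ from Theorem~\ref{thm2}, combined with $\chi = 2 + b_2$, then yields $b_2(\tilde M^4_A)=2$ and $b_2(\tilde M^4_B)=3$.

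To transfer this distinction back to the uncompactified $M^4$, I would apply Mayer--Vietoris to the decomposition $\tilde M^4 = M^4 \cup \xi \cup B^4$, glued along $L(2,1) \sqcup S^3$. Using $H_2(L(2,1))=H_2(S^3)=0$, $H_2(\xi)\cong\mathbb{Z}$ (generated by the zero section), $H_1(L(2,1))=\mathbb{Z}/2$, and $H_1(M^4)=0$ by topological censorship, the relevant portion of the sequence collapses to
\[
0 \longrightarrow H_2(M^4)\oplus\mathbb{Z} \longrightarrow H_2(\tilde M^4) \longrightarrow \mathbb{Z}/2 \longrightarrow 0,
\]
yielding $b_2(M^4)=b_2(\tilde M^4)-1$. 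Consequently $b_2(M^4_A)=1 \neq 2 = b_2(M^4_B)$, so the DOCs $\mathbb{R}\times M^4$ in the two examples cannot be homeomorphic.

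The principal obstacle is verifying that the algorithm in Theorem~\ref{thm2} really does produce the single $(-2)$-disc bundle as the canonical fill-in in both examples; this reduces to the unique length-one continued-fraction expansion of $2/1$ together with the identification in Theorem~\ref{thm2} of continued-fraction entries with the self-intersection numbers of the plumbed disc bundles.
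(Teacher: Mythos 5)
Your proposal is correct, and it follows the same overall strategy as the paper -- exhibit two admissible rod structures with an $\mathbb{RP}^3$ horizon and an asymptotically flat end, fill in canonically, and distinguish the resulting compactifications -- but it differs in the details in ways worth noting. The paper's examples are \eqref{r1} and \eqref{r2}, which differ by appending two extra axis rods $(0,1),(1,0)$, and the compactified manifolds are identified explicitly via the Orlik--Raymond chart as $S^2\times S^2$ versus $S^2\times S^2\,\#\,S^2\times S^2$; the paper then asserts directly that the DOCs differ. You instead use examples differing by a single extra axis rod (I checked your data: the corner determinants are $\pm 1$, the flanking determinants at the horizon are $2$ so the horizon is $L(2,1)$, and the inserted rods $(1,1)$ and $(1,2)$ are exactly the canonical $(-2)$ fill-ins, as one sees either from formula \eqref{1.1} or by normalizing the flanking rods to $(1,0),(-1,2)$ and transforming $(0,1)$ back), you distinguish the compactifications by the cruder but sufficient invariant $\chi=\#\{\text{corners}\}$ rather than by identifying the homeomorphism type, and you add a Mayer--Vietoris step to transfer the $b_2$ distinction from $\tilde{M}^4$ to $M^4$ itself. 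That last step is a genuine improvement in rigor: the paper concludes nonhomeomorphism of the DOCs from nonhomeomorphism of the compactifications without comment, whereas your sequence $0\to H_2(M^4)\oplus\mathbb{Z}\to H_2(\tilde{M}^4)\to\mathbb{Z}/2\to 0$ (valid since $H_2$ and $H_1$ of $L(2,1)\sqcup S^3$ are $0$ and $\mathbb{Z}/2$, and $H_1(M^4)=0$) pins down $b_2(M^4)=b_2(\tilde{M}^4)-1$, giving $1\neq 2$. Two minor points: the vacuum realization of a \emph{prescribed rod structure} comes from the existence results of \cite{KhuriWeinsteinYamada,KhuriWeinsteinYamada1} (the ``mild compatibility conditions'' being exactly \eqref{det condition}, which your examples satisfy) rather than from Theorem \ref{thm3} itself, which only asserts realizability of the listed topologies; and $H_1(M^4)=0$ can alternatively be obtained from the orbit-space/van Kampen structure used in Section \ref{sec3} if you prefer not to invoke topological censorship a second time.
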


%14) Don't forget that our results apply to the minimal supergravity case as well, and thus %can be used to describe the Kunduri-Lucietti solutions, as well as the Tomizawa-Nozawa %solutions.

\section{Plumbing Constructions}
\label{sec2} \setcounter{equation}{0}
\setcounter{section}{2}

Consider a disc bundle $\pi: \xi \rightarrow S^2$ over the 2-sphere whose zero-section has self-intersection number $-k\in \mathbb{Z}$.  Such a bundle can be constructed by gluing two trivial disc bundles $\pi^\pm: \xi^\pm \rightarrow D^\pm$ along the solid tori $\del{D^\pm} \times D^2$. Here $S^2=D^{+}\cup D^{-}$ is the union of the northern and southern hemisphere.  The gluing map
$f: \del {D^+} \times D^2 \rightarrow \del{D^-} \times D^2$
is given by
\begin{equation}
(z, v) \xmapsto{f}  \left(\overline{z}, e^{ik{\theta_0}}v= |v| e^{i(\varphi_0+ k\theta_0)}\right), 
\end{equation}
where $\arg(z) = \theta_0$ and $\arg (v) = \varphi_0$.
Note that the orientations of $\partial D^+$ and $\partial D^-$ induced by the natural orientation of $S^2 = D^+ \cup D^-$ are opposite of each other.  The conjugation $\overline{z}$ of the image of $f$ is introduced to reflect this fact. We write the effect of $f$ simply as
\begin{equation}
(\theta_0, \varphi_0) \xmapsto{f} (-\theta_0, \varphi_0+k\theta_0),
\end{equation}
where $f( \theta_0, \varphi_0) =: ( \theta_1, \varphi_1)$ with $e^{i \theta_1} \in \partial D^-$ and $v= |v|e^{i \varphi_1} \in \pi^{-1} (e^{i \theta_1})$. Observe that in the disc bundle $\xi$ there is a natural 2-torus action which rotates the base and fiber.  Moreover, the boundary of the total space $\partial\xi$ is homeomorphic to the lens space $L(k, 1)$ \cite{Orlik}, and according to van Kampen's theorem $\xi$ is simply connected. (Our presentation faithfully follows that of  \cite{Orlik}, with the sole difference being that our ``$k$" is their ``$-m$" in p.25.)

\begin{figure}
\includegraphics[width=6.5cm]{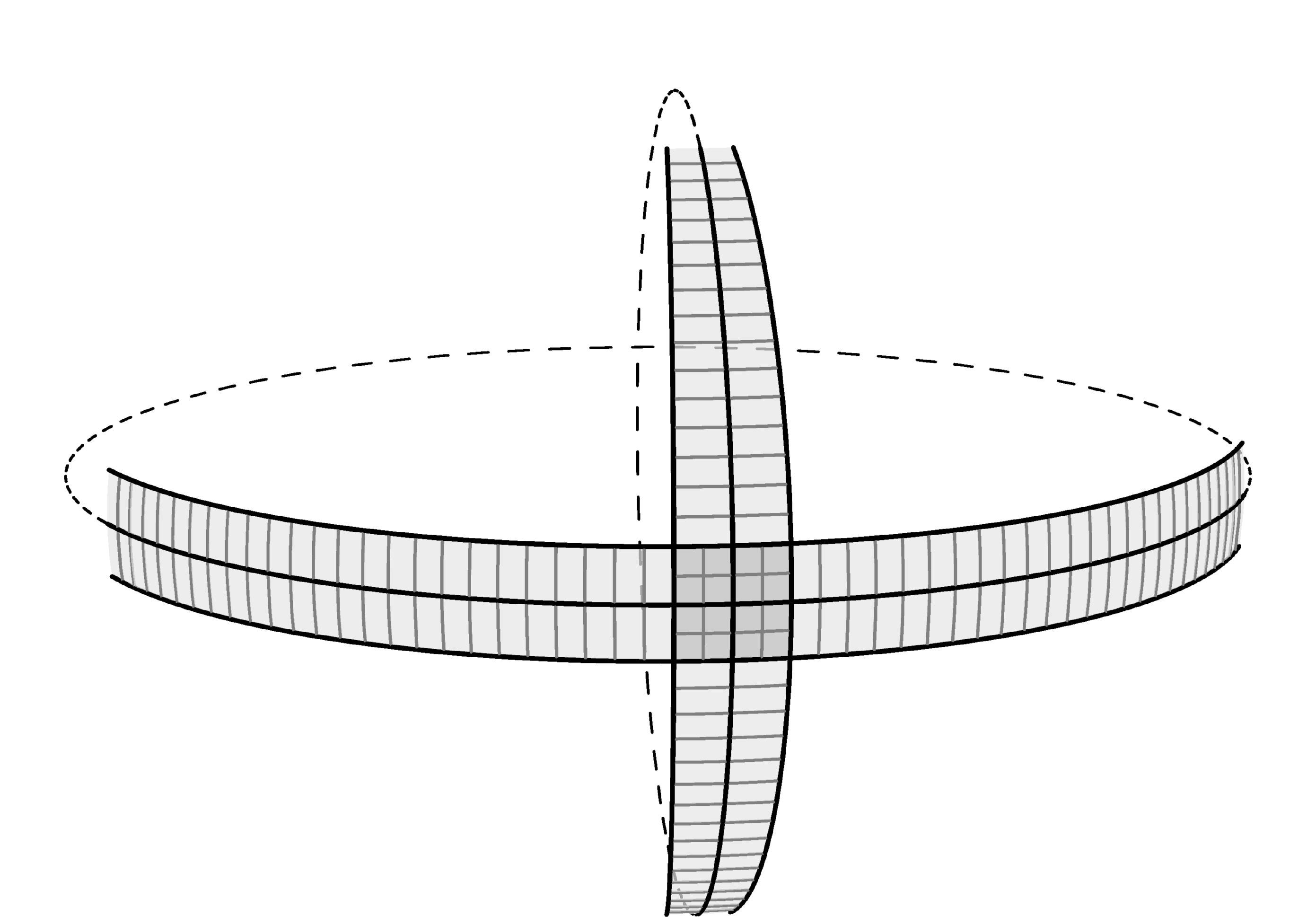}
\caption{Plumbing of two disc bundles}  \label{Picture}
\end{figure}

Two disc bundles $\xi_1$ and $\xi_2$ can be combined via an operation known as \textit{plumbing}.  First take a closed disc  $U_1$ centered at the origin of $D^-_1$ and another disc $U_2$ centered at the origin of $D^+_2$, where the bundle over $U_i$ is trivial. Next identify the pair of polydiscs $\pi^{-1}(U_1) \cong U_1 \times D^2 $ and $\pi^{-1}(U_2) \cong U_2 \times D^2 $ by interchanging fiber and base
\begin{equation}
(z, v) \sim (v, z)
\end{equation}
where  $(z, v) \in U_1 \times D^2 $ and $(v, z) \in U_2 \times D^2$. An illustration is given in Figure \ref{Picture}.
We denote the resulting 4-dimensional manifold with boundary by $\mathcal{P}(\xi_{1},\xi_{2})$, and note that it is simply connected by van Kampen's theorem.  If the first disc bundle $\xi_1$ is obtained by a gluing map $f_1$ so that the self-intersection number of the zero section is $-k_1$, and $\xi_2$'s gluing map $f_2$ induces the self-intersection number $-k_2$, then the boundary of  $\mathcal{P}(\xi_{1},\xi_{2})$ is homeomorphic to a lens space
$L(k_1k_2 - 1, k_2)$ \cite{Orlik}.  When $k_1, k_2 >1$, we note (\cite{Orlik}) that $k_1$ and $k_2$ determines a continued fraction
\begin{equation}
k_1 -\frac{1}{k_2} = \frac{k_1k_2 -1}{k_2}.
\end{equation}

By induction, one can construct a 4-dimensional simply connected manifold $\mathcal{P}(\xi_{1},\cdots,\xi_{\ell})$ with its boundary homeomorphic to a lens space $L(p, q)$, $p> q > 0$, by plumbing a sequence of disc bundles $\xi_i$ with base $S^2$ having self-intersection numbers $-k_i $ satisfying
\begin{equation}
\frac{p}{q} =  k_1- \cfrac{1}{k_2-\cfrac{1}{k_3-\cfrac{1}{{\cdots} - \frac{1}{k_\ell}}}}.
\end{equation}
Recall that each such rational number has a unique expansion of this form with $k_\ell>1$.

This construction has a canonical underlying $U(1)^2$-action, which will now be made more explicit. Consider a sequence of plumbed disc bundles $\mathcal{P}(\xi_1,\cdots,\xi_{\ell})$.
Let the $U(1)$-action around the origin of $D^+_1$ be represented by $t \in [0, 2\pi)$ and the $U(1)$-action on the fibers $D^2$ over $D^+_1$ by $s \in [0, 2 \pi)$, that is
in local coordinates over $D^+_1$ this action may be represented by
\begin{equation}
(\theta_0, \varphi_0) \xmapsto{(t, s)} (\theta_0 + t, \varphi_0 + s).
\end{equation}
Observe that the flow fields $\partial_t$ and $\partial_s$ agree with the coordinate
fields $\partial_{\theta_0}$ and $\partial_{\varphi_0}$, respectively. By working downward through the plumbing construction the $U(1)^2$-action may be described inductively at each stage in terms of these original flow fields.

Recall that the disc bundle $\xi_1$ over $S^2$ is a twisted union of two trivial bundles
\begin{equation}
[D^+_1 \times D^2] \cup_{f_1} [D^-_1 \times D^2]
\end{equation}
where $f_1: \del {D^+_1} \times D^2 \rightarrow \del{D^-_1} \times D^2$ is the gluing map.
The $U(1)^2$ action on $D^-_1 \times D^2$ is twisted by $f_1$ in the sense that
\begin{equation}
(-\theta_0, \varphi_0 +k_1 \theta_0 ) \xmapsto{(t, s)} (-\theta_0 - t, \varphi_0+k_1(\theta_0 +t) + s),
\end{equation}
which may be rewritten as
\begin{equation}
(\theta_1, \varphi_1)   \xmapsto{(t, s)} (\theta_1 - t, \varphi_1 + k_1 t + s)
\end{equation}
with the change of variables
$(\theta_1, \varphi_1) := f_1(\theta_0, \varphi_0) = (-\theta_0, \varphi_0+ k_1 \theta_0)$.
The value $-k_1$ is as above the self-intersection number of the `core curve' $S^2$, namely the zero-section of the disc bundle $\xi_1$. The relation between the flow and coordinate fields is then
\begin{equation}
\label{3-consecutive-rods-IV}
\begin{pmatrix}
    \partial_{\theta_1}   &  \partial_{\varphi_1}\\
\end{pmatrix}
=
\begin{pmatrix}
    \partial_{t}   &  \partial_{s}\\
\end{pmatrix}
\begin{pmatrix}
     -1    & 0\\
     k_1  &  1 \\
\end{pmatrix}.
\end{equation}
Here we note that the first column $(-1, k_1)^t$ induces $\frac{k_1}{-1}$, the so-called Dehn surgery coefficient, which in turn determines the lens space $L(k_1, 1)$  (see example 5.3.2 of \cite{GompfStipsicz}). The second column $(0,1)^t$ consists of rod structure coefficients. In addition, for the sake of clarification we also mention that the preceding disc bundle $\xi_{0}$ above $\xi_1$ has the $U(1)^2$-action
\begin{equation}
(\varphi_0, \theta_{0})  \xmapsto{(t, s)} (\varphi_0+s, \theta_{0}+t ),
\end{equation}
which induces
\begin{equation}
\label{3-consecutive-rods-III}
\begin{pmatrix}
    \partial_{\varphi_0}   &  \partial_{\theta_0}\\
\end{pmatrix}
=
\begin{pmatrix}
    \partial_{t}   &  \partial_{s}\\
\end{pmatrix}
\begin{pmatrix}
     0    & 1\\
     1  &   0 \\
\end{pmatrix}.
\end{equation}

In the next step of the induction process consider the plumbing of $\xi_2$ to $\xi_1$. This involves an identification between $D^-_1 \times D^2$ and $D^+_2 \times D^2$ in which the relevant local coordinates on $\xi_2$ are $(\varphi_1, \theta_1)$. That is, $\varphi_1$ is the argument of the base space $D^+_2$ and $\theta_1$ is the argument of the disc fiber. The $U(1)^2$-action on $D^+_2 \times D^2$ is inherited from $D^-_1 \times D^2$ as follows
\begin{equation}
(\varphi_1, \theta_1)   \xmapsto{(t, s)} ( \varphi_1 + k_1 t + s, \theta_1 - t).
\end{equation}
The $U(1)^2$-action on $D^-_2 \times D^2$ is twisted by $f_2$ and is given by
\begin{equation}
(- \varphi_1, \theta_1 + k_2 \varphi_1) \xmapsto{(t, s)} (-\varphi_1 - k_1 t - s,  \theta_1 + k_2(\varphi_1 + k_1 t + s) - t),
\end{equation}
which may be rewritten as
\begin{equation}
(\varphi_2, \theta_2)   \xmapsto{(t, s)} (\varphi_2 - k_1 t - s, \theta_2 + k_1k_2 t - t + k_2 s )
\end{equation}
with the change of variables
$(\varphi_2, \theta_2) := f_2(\varphi_1, \theta_1) = (-\varphi_1, \theta_1 + k_2 \varphi_1)$.
The following relation then holds between the flow and coordinate vector fields
\begin{equation}
\label{3-consecutive-rods-I}
\begin{pmatrix}
    \partial_{\varphi_2}  &  \partial_{\theta_2}\\
\end{pmatrix}
=
\begin{pmatrix}
    \partial_{t}   &  \partial_{s}\\
\end{pmatrix}
\begin{pmatrix}
     -k_2    & -1\\
      k_1k_2 -1  &  k_1 \\
\end{pmatrix}.
\end{equation}
The first column $(-k_2, k_1k_2 -1)^t$ induces the Dehn surgery coefficient $\frac{k_1k_2 -1}{-k_2}$, which in turn determines the lens space $L(k_1k_2-1, k_2)$, and the second column $(-1, k_1)^t$ is the rod structure that is inherited from the Dehn surgery coefficient of the previous bundle $\xi_1$.
Observe that this gives  rise to the continued fraction
\begin{equation}\label{two disc}
\frac{k_1 k_2 -1 }{k_2} = k_1 - \frac{1}{k_2},
\end{equation}
and the boundary $\partial\mathcal{P}(\xi_{1},\xi_{2})$ is the lens space $L(k_1 k_2-1,k_2)$.

Let us now plumb the third disc bundle $\xi_3$, with self-intersection number $k_3$, to the bottom of $\mathcal{P}(\xi_{1},\xi_{2})$ to obtain $\mathcal{P}(\xi_{1},\xi_{2},\xi_{3})$.  Recall that the core curve $S^2$ of $\xi_3$ is the union of the northern and southern hemispheres $D^+_3 \cup D^-_3$. The $U(1)^2$-action on $D^+_3\times D^2$ is inherited from the action on $D^-_2\times D^2$ by
\begin{equation}
(\theta_2, \varphi_2)  \xmapsto{(t, s)}  (\theta_2 + k_1k_2 t - t + k_2 s, \varphi_2 - k_1 t - s).
\end{equation}
The $U(1)^2$-action on $D^-_3\times D^2$ is then twisted by $f_3$ so that
\begin{equation}
(-\theta_2, \varphi_2+k_3 \theta_2)  \xmapsto{(t, s)} (-\theta_2+ (1-k_1k_2)t - k_2s, \varphi_2 +k_3\{\theta_2 + (k_1k_2 t-1)t+k_2 s\} - k_1 t -s),
\end{equation}
which is written as
\begin{equation}
(\theta_3, \varphi_3)  \xmapsto{(t, s)} (\theta_3 + (1-k_1k_2)t - k_2s, \varphi_3 + (k_1k_2k_3 - k_1-k_3)t + (k_2k_3-1)s )
\end{equation}
with the change of variables $(\theta_3, \varphi_3) := f_3(\theta_2, \varphi_2) = (-\theta_2, \varphi_2+k_3 \theta_2  )$. The relation between the flow and coordinate vector fields is then
\begin{equation}
\label{3-consecutive-rods-II}
\begin{pmatrix}
    \partial_{\theta_3}  &  \partial_{\varphi_3}\\
\end{pmatrix}
=
\begin{pmatrix}
    \partial_{t}   &  \partial_{s}\\
\end{pmatrix}
\begin{pmatrix}
     1- k_2 k_3    & - k_2\\
     k_1k_2k_3 -k_1 - k_3   &  k_1k_2-1 \\
\end{pmatrix}.
\end{equation}
This gives rise to the continued fraction for $k_i > 0$
\begin{equation}
\frac{ k_1k_2k_3 -k_1 - k_3  }{k_2k_3-1} = k_1  - \cfrac{1}{k_2-\cfrac{1}{k_3}},
\end{equation}
and the boundary of $\mathcal{P}(\xi_{1},\xi_{2},\xi_{3})$ is the lens space $L(k_1k_2k_3 -k_1 - k_3 , k_2k_3-1 )$.

This process may be continued inductively. The resulting $2 \times 2$ matrix representing the $U(1)^2$-symmetry of $\mathcal{P}(\xi_{1},\xi_{2}, \dots \xi_\ell)$ in terms of $\partial_{t}$ and $\partial_{s}$ encodes topological information about $\partial \mathcal{P}(\xi_{1},\xi_{2}, \dots \xi_\ell)$ in its first column via the Dehn coefficient, and in the second column it encodes the $U(1)$-symmetry of disc fibers for the $\ell$-th disc bundle $\xi_\ell$. We will see later that the second column of this $2 \times 2$ matrix is of particular importance for the relation with rod structures of stationary bi-axisymmetric black holes. When $k_i >0$, the inductive construction is associated to the following arithmetic algorithm.

\begin{proposition}\label{prop1}
Let $\{\xi_i\}_{i=1}^{\ell+1}$ be a sequence of disc bundles over $S^2$ with zero-section self-intersection numbers $-k_i$. Let $\partial_{t}$ and $\partial_{s}$ denote generators of the $U(1)^2$-action on the plumbing construction $\mathcal{P}(\xi_1,\cdots,\xi_{\ell+1})$ which coincide with the canonical rotation of base and fiber on the trivialization $D_1^+\times D^2$.
If each $k_i>0$ and $k_\ell>1$, then the $U(1)$-action on the disc fiber over $D^-_{\ell+1}$ is given by $-n_\ell\partial_t+m_\ell\partial_s$ for some $m_\ell,n_\ell\in\mathbb{Z}$ satisfying
\begin{equation}
\frac{m_\ell}{n_\ell} = k_1- \cfrac{1}{k_2-\cfrac{1}{k_3-\cfrac{1}{{\cdots} - \frac{1}{k_\ell}}}}.
\end{equation}
Furthermore, the boundary of the plumbed disc bundles $\partial\mathcal{P}(\xi_1,\cdots,\xi_{\ell+1})$ is diffeomorphic to the lens space $L( m_\ell,n_\ell)$.
\end{proposition}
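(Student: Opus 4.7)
The plan is to prove the proposition by induction on the number of plumbed disc bundles, with the explicit matrix computations for one, two, and three bundles already displayed in the preceding discussion providing the base cases. The entire argument reduces to the iteration of a single matrix identity.

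At each stage $n$ of the construction I would carry along the $2\times 2$ integer matrix $M_n$ determined by $(\partial_{\text{base}},\partial_{\text{fiber}})=(\partial_t,\partial_s)\,M_n$ on the deepest trivializing chart $D^-_n\times D^2$. The plumbing of the next bundle $\xi_{n+1}$ then factors into two canonical operations: the base-fiber interchange implicit in the identification $D^-_n\times D^2\cong D^+_{n+1}\times D^2$, which amounts to right-multiplication by the swap matrix $J=\bigl(\begin{smallmatrix}0 & 1 \\ 1 & 0\end{smallmatrix}\bigr)$; followed by the gluing map $f_{n+1}\colon(\theta,\varphi)\mapsto(-\theta,\varphi+k_{n+1}\theta)$, which is right-multiplication by the twist $T_{k_{n+1}}=\bigl(\begin{smallmatrix}-1 & 0 \\ k_{n+1} & 1\end{smallmatrix}\bigr)$. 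Hence $M_{n+1}=M_n\,J\,T_{k_{n+1}}$, and with initial data $M_1=\bigl(\begin{smallmatrix}-1 & 0 \\ k_1 & 1\end{smallmatrix}\bigr)$ one checks by direct multiplication that this recursion reproduces the matrices already displayed for $n=2$ and $n=3$.

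Writing the two columns of $M_n$ as $(-q_n,p_n)^T$ and $(-n_{n-1},m_{n-1})^T$, expansion of the recursion $M_{n+1}=M_nJT_{k_{n+1}}$ immediately yields the three-term linear recurrences
\begin{equation*}
p_{n+1}=k_{n+1}p_n-p_{n-1},\qquad q_{n+1}=k_{n+1}q_n-q_{n-1},
\end{equation*}
together with the column-swap identities $m_n=p_n$ and $n_n=q_n$. These recurrences, with the initial conditions read off from $M_1$ and the natural convention $p_0=1$, $q_0=0$, are exactly the classical ones generating the numerator and denominator of the continued fraction $k_1-\cfrac{1}{k_2-\cfrac{1}{\cdots-\frac{1}{k_\ell}}}$, so induction on $\ell$ delivers the claimed formula for $m_\ell/n_\ell$. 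The hypotheses $k_i>0$ and $k_\ell>1$ guarantee that the resulting expansion is in its canonical form with $m_\ell>n_\ell>0$.

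For the boundary identification, once $M_{\ell+1}$ has been computed its first column records the Dehn surgery coefficient that presents $\partial\mathcal{P}(\xi_1,\dots,\xi_{\ell+1})$ as a lens space, via the standard plumbing-to-Seifert-fibration dictionary of Orlik that has already been invoked above in the one-, two-, and three-bundle cases. The only real difficulty in the argument is bookkeeping: one must track signs carefully and ensure that, at each inductive step, the swap $J$ consistently interchanges the Dehn-coefficient column with the fiber-rotation column so that the twist $T_{k_{n+1}}$ acts in the form needed for the inductive hypothesis. Once the single recursion $M_{n+1}=M_n\,J\,T_{k_{n+1}}$ is set up cleanly, the continued fraction expansion and the identification of the boundary lens space both fall out as direct consequences of the same matrix identity.
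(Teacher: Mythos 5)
Your proposal is correct in substance and is essentially the paper's own argument: the paper justifies the proposition precisely by this inductive tracking of the $U(1)^2$-action through the base--fiber interchange of plumbing and the gluing twist $f_{n+1}$, and the displayed matrices \eqref{3-consecutive-rods-IV}, \eqref{3-consecutive-rods-I}, \eqref{3-consecutive-rods-II} together with the general step \eqref{a}--\eqref{c} are exactly your recursion $M_{n+1}=M_n\,J\,T_{k_{n+1}}$, with the second column carrying the fiber rotation (hence the Hirzebruch--Jung recurrences $p_{n+1}=k_{n+1}p_n-p_{n-1}$, $q_{n+1}=k_{n+1}q_n-q_{n-1}$) and the first column the Dehn coefficient. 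One caution on the boundary clause: since the statement asserts $L(m_\ell,n_\ell)$ with data from $k_1,\dots,k_\ell$ only, the relevant column is the second column of $M_{\ell+1}$, which by your swap identity equals the first column of $M_\ell$ (the Dehn coefficient of $\mathcal{P}(\xi_1,\dots,\xi_\ell)$); reading instead the first column of $M_{\ell+1}$, as your last paragraph does, produces the lens space associated to the full continued fraction in $k_1,\dots,k_{\ell+1}$, so you should align that final step with the convention used in the statement.
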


The discussion above may be given a localized description which elucidates the connection between self-intersection numbers of zero-sections and the $U(1)$-action on disc fibers. The general form of the $U(1)^2$-action parameterized by $(t, s)$ on the trivialization $D^{-}_{i-1} \subset D^2$ within the bundle $\xi_{i-1}$ takes the form
\begin{equation}
(\varphi_{i-1}, \theta_{i-1}) \xmapsto{(t, s)} (\varphi_{i-1} + n_i t+ m_i s, \theta_{i-1} + q_i t + p_i s),
\end{equation}
where the integer coefficients satisfy the normalization condition $m_i q_i -n_i p_i =1$ so that the flows are diffeomorphisms. It follows that the relation between coordinate fields and generators of the action is
\begin{equation}\label{a}
\begin{pmatrix}
    \partial_{\varphi_{i-1}}   &  \partial_{\theta_{i-1}}\\
\end{pmatrix}
=
\begin{pmatrix}
    \partial_{t}   &  \partial_{s}\\
\end{pmatrix}
\begin{pmatrix}
-p_i    & m_i  \\
 q_i   &   -n_i  \\
\end{pmatrix}.
\end{equation}
This action is transmitted to the trivialization $D^+_i \subset D^2$ within $\xi_i$ as
\begin{equation}
(\theta_{i-1}, \varphi_{i-1}) \xmapsto{(t, s)} (\theta_{i-1} + q_i t + p_i s, \varphi_{i-1} +  n_i t + m_i s).
\end{equation}
Recall that the disc bundle $\xi_i$ is a union of two trivial bundles
\begin{equation}
[D^+_i \times D^2] \cup_{f_i} [D^-_i \times D^2].
\end{equation}
The twisting imposed by the gluing map $f_i$ yields the following expression for the action over $D^-_i$
\begin{equation}
(\theta_i, \varphi_i)   \xmapsto{(t, s)} (\theta_i - q_i t - p_i s, \varphi_i + (n_i+ k_i q_i) t + (m_i +k_i p_i) s ),
\end{equation}
where the change of variables is given by
$(\theta_i, \varphi_i) := f_i (\theta_{i-1}, \varphi_{i-1}) = (-\theta_{i-1}, \varphi_{i-1} + k_i \theta_{i-1})$. We then have
\begin{equation}\label{b}
\begin{pmatrix}
    \partial_{\theta_{i}}   &  \partial_{\varphi_{i}}\\
\end{pmatrix}
=
\begin{pmatrix}
    \partial_{t}   &  \partial_{s}\\
\end{pmatrix}
\begin{pmatrix}
    -m_i - k_i p_i    &  -p_i\\
     k_i q_i + n_i &   q_i\\
\end{pmatrix}.
\end{equation}
By continuing this algorithm, the desired formula for coordinate fields on the trivialization $D^{-}_{i+1}\times D^2$ is found to be
\begin{equation}\label{c}
\begin{pmatrix}
    \partial_{\varphi_{i+1}}   &  \partial_{\theta_{i+1}}\\
\end{pmatrix}
=
\begin{pmatrix}
    \partial_{t}   &  \partial_{s}\\
\end{pmatrix}
\begin{pmatrix}
  k_{i+1 } (-k_i p_i- m_i) + p_i    & -m_i - k_i p_i  \\
  k_{i+1} (n_i + k_i q_i) - q_i   &    k_i q_i + n_i \\
\end{pmatrix},
\end{equation}
where
\begin{equation}
(\varphi_{i+1}, \theta_{i+1}) := f_{i+1} (\varphi_i, \theta_i) = (\varphi_i, \theta_i-k_{i+1} \varphi_{i}).
\end{equation}
This generalizes the $U(2)^2$-action demonstrated in Equation (\ref{3-consecutive-rods-I}), where $i=1$ and $m_1=q_1=1, n_1=p_1 =0$.

The $U(1)$-action on disc fibers within the bundles $\xi_{i-1}$, $\xi_i$, and $\xi_{i+1}$ may now be read off from \eqref{a}, \eqref{b}, and \eqref{c}, and expressed in terms of the action generators as
\begin{equation}\label{d}
m_i\partial_t - n_i \partial_s,\quad\quad
-p_i\partial_t + q_i \partial_s,\quad\quad
-(m_i + k_i p_i)\partial_t +( k_i q_i + n_i)\partial_s.
\end{equation}
The self-intersection number of the zero-section within $\xi_i$ may now be computed as a product of determinants involving these vectors
\begin{equation}\label{e}
- k_i=
\begin{vmatrix}
 m_i  & -p_i     \\
 - n_i  & q_i    \\
\end{vmatrix}
\begin{vmatrix}
-p_i  &     -(m_i + k_i p_i)    \\
q_i    &   k_i q_i + n_i \\
\end{vmatrix}
\begin{vmatrix}
 -(m_i + k_i p_i) & m_i  \\
   k_i q_i + n_i & -n_i \\
\end{vmatrix}.
\end{equation}
This fact is relevant to the setting of stationary bi-axisymmetric spacetimes since in various applications knowledge of the action on disc fibers \eqref{d} will be given, and formula \eqref{e} allows one to then compute the self-intersection numbers from this data.

\begin{proposition}\label{prop2}
Consider a consecutive sequence of three disc bundles $\xi_{i-1}$, $\xi_i$, and $\xi_{i+1}$ within the plumbing construction $\mathcal{P}(\xi_1,\cdots,\xi_{\ell+1})$, such that the respective $U(1)$-actions on their fibers are given by
\begin{equation}
m_{i-1}\partial_t +n_{i-1}\partial_s, \quad\quad  m_{i}\partial_t+ n_{i}\partial_s, \quad\quad m_{i+1}\partial_t + n_{i+1}\partial_s.
\end{equation}
Then the self-intersection number of the zero-section of $\xi_i$ is given by
\begin{equation}
\begin{vmatrix}
m_{i-1}  & m_i     \\
 n_{i-1}  & n_i    \\
\end{vmatrix}
\begin{vmatrix}
m_i  &  m_{i+1}     \\
n_i    &  n_{i+1} \\
\end{vmatrix}
\begin{vmatrix}
 m_{i+1}  &  m_{i-1}     \\
n_{i+1}    &  n_{i-1}  \\
\end{vmatrix}.
\end{equation}
\end{proposition}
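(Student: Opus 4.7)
The plan is to recognize that Proposition \ref{prop2} is a coordinate-invariant restatement of the explicit computation already carried out in equations \eqref{a}--\eqref{e} above. Those equations were derived in terms of the specific integer parameters $(m_i,n_i,p_i,q_i,k_i)$ coming from inductively transporting the $U(1)^2$-action through the plumbing, whereas Proposition \ref{prop2} encodes the same three fiber-actions by abstract labels $(m_{i-1},n_{i-1}),(m_i,n_i),(m_{i+1},n_{i+1})$. So the proof reduces to a bookkeeping step: set up the dictionary between the two notations and verify that the product of the three $2\times 2$ determinants evaluates to $-k_i$.

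Concretely, I would first read off from \eqref{d} the identification between the vectors generating the $U(1)$-action on the three consecutive disc fibers and the labels of Proposition \ref{prop2}. Writing the old parameters of \eqref{d} as $M,N,P,Q$ (so that $MQ-NP=1$, the normalization condition following from the requirement that the $U(1)^2$-action consist of diffeomorphisms; this is the same determinant condition as \eqref{det condition}), the correspondence is
\begin{equation}
(m_{i-1},n_{i-1})=(M,-N),\qquad (m_i,n_i)=(-P,Q),\qquad (m_{i+1},n_{i+1})=\bigl(-(M+k_iP),\,k_iQ+N\bigr).
\end{equation}
Substituting into the first two $2\times 2$ determinants of the claimed formula, each collapses via the normalization to $MQ-NP=1$. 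Substituting into the third, the cross-terms cancel and one is left with $k_i(NP-MQ)=-k_i$. The product $1\cdot 1\cdot(-k_i)=-k_i$ then recovers precisely the self-intersection number computed in \eqref{e}.

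The main obstacle, such as it is, is conceptual rather than computational: one must verify that the formula is independent of the particular $(\partial_t,\partial_s)$-trivialization used, since the labels $(m_j,n_j)$ depend on a choice of generators for $U(1)^2$. For this I would observe that any change of generators is implemented by an element $A\in\mathrm{SL}(2,\mathbb{Z})$, under which each column $(m_j,n_j)^{t}$ is replaced by $A^{-1}(m_j,n_j)^{t}$; a $2\times 2$ determinant of two such columns is then invariant because $\det(A^{-1})=1$. Hence the product of determinants is a genuine invariant of the three fiber-actions, and it suffices to evaluate it in the standard form dictated by \eqref{d}, which is precisely what the computation above does. There is also a mild sign ambiguity since $(m_j,n_j)$ and $(-m_j,-n_j)$ generate the same $U(1)$-subgroup, but any such sign flips cancel in the product of three determinants, so the formula is well-defined.
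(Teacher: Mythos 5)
Your proposal is correct and follows essentially the same route as the paper: Proposition \ref{prop2} is indeed just the relabeled form of the computation in \eqref{a}--\eqref{e}, and your dictionary $(m_{i-1},n_{i-1})=(M,-N)$, $(m_i,n_i)=(-P,Q)$, $(m_{i+1},n_{i+1})=(-(M+k_iP),k_iQ+N)$ together with the normalization $MQ-NP=1$ reproduces the product $1\cdot 1\cdot(-k_i)$ exactly as in \eqref{e}. The added observations on $\mathrm{SL}(2,\mathbb{Z})$-invariance and cancellation of sign ambiguities are correct and a useful supplement, though the paper leaves them implicit.
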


We note that a similar formula appears in p.544 of \cite{OrlikRaymond1}  with a sign difference, due to the difference in coordinates. Recall that our coordinate system is in accordance with \cite{Orlik}.

\section{Proof of the Main Theorems}
\label{sec3} \setcounter{equation}{0}
\setcounter{section}{3}

Consider a spacetime $\mathcal{M}^5$ as given in Theorem \ref{thm1}, with Cauchy surface $M^4$. The orbit space $M^4/U(1)^2$ is expressed as the $\rho z$-half plane, in which the boundary is divided into a sequence of rods on which various linear combinations of the Killing fields $\partial_{\phi^a}$, $a=1,2$ vanish. Consider a consecutive sequence of three axis rods $\Gamma_{1}$, $\Gamma_{2}$, and $\Gamma_{3}$ separated by two corner points $p_1$ and $p_2$, as illustrated in Figure \ref{Figure 1}. We claim that the region $\Omega\subset M^4/U(1)^2$ bounded between the axes and a semi-circle connecting rod $\Gamma_1$ to $\Gamma_3$, represents a disc bundle over $S^2$. To see this observe that the middle rod $\Gamma_2$ is a 2-sphere in $M^4$. This is due to the fact that one $U(1)$ generator, say $\partial_{\phi^2}$, vanishes on $\Gamma_2$ while the other $\partial_{\phi^1}$ generates a circle at each point, except at the bounding corner points $p_1$, $p_2$ where both generators degenerate. This base $S^2$ is parameterized by the coordinate $z$ of the plane and the coordinate $\phi^1$ of the $U(1)$ generator which does not vanish on the open middle axis. The $D^2$ disc fibers may be described in the orbit space as segments emanating from $\Gamma_2$ and foliating the region $\Omega$ as shown in Figure \ref{Figure 1}. These segments represent discs over points of $S^2$. Indeed,  starting from a point on $\Gamma_2$ and fixing the coordinate $\phi^1$ on $S^2$, each point of the segments represents a circle associated to $\partial_{\phi^2}$ and this circle shrinks to a point at the starting point of the segment on $\Gamma_2$. As the foliating segments move from $\Gamma_1$ to $\Gamma_3$ the disc fibers are twisted according to rod structures. We can now transcribe Proposition \ref{prop2} to the language of rod structure.

\begin{figure}
\includegraphics{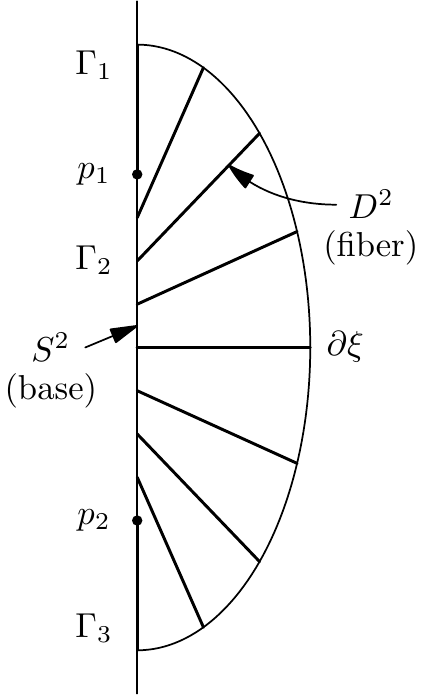}
\caption{Twisted disc bundle}  \label{Figure 1}
\end{figure}

\begin{lemma}\label{lemma1}
Consider three consecutive axis rods $\Gamma_1$, $\Gamma_2$, and $\Gamma_3$ having rod structures $(m_1,n_1)$, $(m_2, n_2)$, and $(m_3,n_3)$. The orbit space neighborhood of these rods enclosed by a semi-circle connecting $\Gamma_1$ to $\Gamma_3$ represents a disc bundle over $S^2$ in $M^4$. The zero-section self-intersection number of this disc bundle is given by
\begin{equation}
\begin{vmatrix}
m_1  & m_2     \\
 n_1  & n_2    \\
\end{vmatrix}
\begin{vmatrix}
m_2  &  m_3     \\
n_2   &  n_3 \\
\end{vmatrix}
\begin{vmatrix}
 m_3  &  m_1     \\
n_3    &  n_1  \\
\end{vmatrix}.
\end{equation}
\end{lemma}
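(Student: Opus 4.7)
The plan is to deduce the formula as a direct translation of Proposition \ref{prop2} to the rod-structure setting, once the geometric identification between axis rods and disc bundles is made precise. First I would make rigorous the picture sketched immediately before the statement: the middle rod $\Gamma_2$ sweeps out a topological $S^2$ in $M^4$, because the combination $m_2\partial_{\phi^1}+n_2\partial_{\phi^2}$ vanishes on $\Gamma_2$, a transverse $U(1)$ generates a circle over each interior point that collapses precisely at the two corner points $p_1$ and $p_2$, and the determinant condition \eqref{det condition} at those corners ensures that neighborhoods of $p_1,p_2$ in $M^4$ are balls $B^4$. The orbit-space segments transverse to $\Gamma_2$ and terminating on $\Gamma_1\cup\Gamma_3$ then parametrize discs in $M^4$, each collapsed at its base point on $\Gamma_2$, yielding a $D^2$-bundle over this $S^2$.

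Next I would align the rod-structure data with the $U(1)^2$-action data used in Proposition \ref{prop2}. In the plumbing picture of Section \ref{sec2}, the pairs appearing in the three determinants are exactly the coefficients $(m_i,n_i)$ for which $m_i\partial_t+n_i\partial_s$ generates the $U(1)$ fixing the zero-section of the $i$-th disc bundle. Identifying $\partial_t=\partial_{\phi^1}$ and $\partial_s=\partial_{\phi^2}$, the Killing vector that vanishes along the rod $\Gamma_i$ is, by definition of the rod structure $(m_i,n_i)$, the combination $m_i\partial_{\phi^1}+n_i\partial_{\phi^2}$. Hence the three rod structures $(m_1,n_1),(m_2,n_2),(m_3,n_3)$ are literally the data that generate the $U(1)$-actions on the disc fibers of the three consecutive disc bundles appearing in Proposition \ref{prop2}.

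With this dictionary in hand, I would invoke Proposition \ref{prop2} to conclude that the self-intersection number of the zero-section (that is, of $\Gamma_2$ realized as $S^2 \subset M^4$) within the disc bundle over $\Omega$ equals the stated triple product of $2\times 2$ determinants. The step I expect to require the most care is verifying that the formula is insensitive to the choices inherent in the rod-structure description: each pair $(m_i,n_i)$ is only determined up to an overall sign, but each pair appears in exactly two of the three determinants, so flipping a sign multiplies the product by $(-1)^2=1$; similarly one must check consistency with the orientation conventions from Section \ref{sec2}, in particular that the direction of ``moving from $\Gamma_1$ to $\Gamma_3$'' across $\Gamma_2$ matches the plumbing order $\xi_{i-1},\xi_i,\xi_{i+1}$, which is the one source of potential sign confusion. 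Once this bookkeeping is confirmed, the lemma follows from Proposition \ref{prop2} with essentially no further computation.
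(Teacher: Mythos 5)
Your proposal is correct and follows essentially the same route as the paper: the text preceding the lemma establishes exactly this geometric identification (the middle rod sweeping out the base $S^2$, orbit-space segments as collapsed disc fibers), and the self-intersection formula is then obtained by transcribing Proposition \ref{prop2} with $\partial_t=\partial_{\phi^1}$, $\partial_s=\partial_{\phi^2}$, so that the rod structures are the fiber $U(1)$ generators. Your added remark on sign-insensitivity of the triple product is a reasonable bookkeeping check consistent with the paper's conventions.
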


This lemma shows that each axis rod may be interpreted as giving rise to a twisted disc bundle on $S^2$, if it is bordered on both sides by axis rods. We note that the most elementary sequence of rod structures is given by
\begin{equation}
\Gamma_1: (1,0), \qquad \Gamma_2: (0,1), \qquad \Gamma_3: (-1, k),
\end{equation}
with
\begin{equation}
-k=
\begin{vmatrix}
1  & 0    \\
 0  & 1    \\
\end{vmatrix}
\begin{vmatrix}
0  &  -1     \\
1    &  k \\
\end{vmatrix}
\begin{vmatrix}
-1  &  1     \\
k    &  0  \\
\end{vmatrix}.
\end{equation}
The rod structures $(1, 0)$, $(0, 1)$, $(-1,k)$ arise as the second columns of the $2 \times 2$ matrices appearing in equations (\ref{3-consecutive-rods-III}, \ref{3-consecutive-rods-IV}, \ref{3-consecutive-rods-I}) respectively.  Furthermore, the first column of the $2 \times 2$ matrix in (\ref{3-consecutive-rods-IV}) gives the Dehn coefficient $-k$, so that the total space of the disc bundle over $\Gamma_2$ has boundary $L(k, 1)$.

Consider now a consecutive sequence of four axis rods $\Gamma_i$, $i=1,2,3,4$. The first three rods give rise to a disc bundle $\xi_1$ on $S^2$ corresponding to a region $\Omega_1\subset M^4/U(1)^2$ between a semi-circle and the axes, and similarly the last three rods yield a disc bundle $\xi_2$ and corresponding projection $\Omega_2$ within the orbit space, see Figure \ref{Plumbing}. The region of the bundle associated with the intersection $\Omega_1\cap\Omega_2$ is homeomorphic to $B^4$ in light of \eqref{det condition}, and represents a trivialization $D_{1}^{-}\times D^2$ over the southern hemisphere of $\xi_1$ and a trivialization $D_{2}^{+}\times D^2$ over the northern hemisphere of $\xi_2$. By changing coordinates in $U(1)^2$ if necessary, we may assume without loss of generality that the rod structures for $\Gamma_{2}$, $\Gamma_3$ are $(1,0)$, $(0,1)$. Then as described above, the segments emanating from $\Gamma_3$ in Figure \ref{Plumbing} represent disc fibers in $\xi_2$ which may be given coordinates
$(r_2,\phi^2)$. Furthermore, coordinates $(r_1,\phi^1)$ may be used to parameterize the base $D_{2}^{+}$, where $r_1$ and $r_2$ are radii for the circles foliating the two discs. It follows that with respect to $\xi_2$ the region $\Omega_1\cap\Omega_2$ is parameterized by coordinates $(r_1,\phi^1,r_2, \phi^2)\in D_{2}^{+}\times D^2$. On the other hand, from the perspective of $\xi_1$ the segments emanating from $\Gamma_3$ represent sections, and are thus parameterized by the same coordinates as used for the base $(r_2,\phi^2)\in D_{1}^{-}$. Moreover the segments emanating from $\Gamma_2$ represent fibers of $\xi_1$ and are parameterized by $(r_1,\phi^1)$. This interchanging of fiber and base when passing from $\xi_1$ to $\xi_2$ is precisely the plumbing construction described in the previous section.

\begin{figure}
\includegraphics{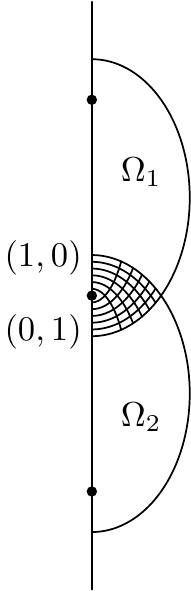}
\caption{Plumbing in the orbit space}  \label{Plumbing}
\end{figure}

\begin{lemma}\label{lemma2}
Consider four consecutive axis rods $\{\Gamma_i\}_{i=1}^{4}$ having rod structures $\{(m_i,n_i)\}_{i=1}^{4}$. The orbit space neighborhood of these rods enclosed by a semi-circle connecting $\Gamma_1$ to $\Gamma_4$ represents in $M^4$ the plumbing $\mathcal{P}(\xi_1,\xi_2)$ of the two disc bundles on $S^2$ associated with the sequences of three consecutive rods $(\Gamma_1,\Gamma_2,\Gamma_3)$ and $(\Gamma_2,\Gamma_3,\Gamma_4)$, where $\Gamma_2$ and $\Gamma_3$ represent the base $S^2$ of $\xi_1, \xi_2$.
The zero-section self-intersection numbers $-k_1$ and $-k_2$ of $\xi_1$ and $\xi_2$ determine the boundary topology of the plumbing construction through the formula $\partial\mathcal{P}(\xi_1,\xi_2)=L(k_1 k_2 -1, k_2)$.
\end{lemma}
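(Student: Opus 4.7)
The plan is to apply Lemma \ref{lemma1} twice to obtain the two disc bundles $\xi_1$ (from the triple $\Gamma_1,\Gamma_2,\Gamma_3$) and $\xi_2$ (from $\Gamma_2,\Gamma_3,\Gamma_4$), and then to verify directly in the orbit space that the common transition region between them realizes exactly the plumbing identification of Section \ref{sec2}. Let $\Omega_1,\Omega_2\subset M^4/U(1)^2$ denote the respective orbit regions bounded by a semi-circle and the axes, as in the discussion preceding the lemma and Figure \ref{Plumbing}. Then $\Omega_1\cup\Omega_2$ is enclosed by a single larger semi-circle from $\Gamma_1$ to $\Gamma_4$, and $\Omega_1\cap\Omega_2$ is a neighborhood of the corner separating $\Gamma_2$ from $\Gamma_3$. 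Condition \eqref{det condition} applied to the rod structures $(m_2,n_2)$ and $(m_3,n_3)$ guarantees that the preimage of this intersection under the projection $M^4\to M^4/U(1)^2$ is homeomorphic to a $4$-ball, i.e. to $D^2\times D^2$.

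Next I would normalize coordinates. Since the formula in Lemma \ref{lemma1} is invariant under a $GL(2,\mathbb{Z})$ change of basis of the torus $U(1)^2$, I may assume without loss of generality that the rod structures of $\Gamma_2$ and $\Gamma_3$ are $(1,0)$ and $(0,1)$ respectively. Introduce polar coordinates $(r_1,\phi^1)$ and $(r_2,\phi^2)$ on the two disc factors of this $B^4\cong D^2\times D^2$, so that the $\phi^1$-circles collapse along $\Gamma_2$ and the $\phi^2$-circles collapse along $\Gamma_3$. From the viewpoint of $\xi_1$, the rod $\Gamma_2$ is part of the zero-section $S^2$, the segments in $\Omega_1\cap\Omega_2$ emanating from $\Gamma_2$ are disc fibers with fiber coordinates $(r_1,\phi^1)$, and the transverse parameters $(r_2,\phi^2)$ serve as base coordinates on the southern hemisphere $D_1^-$. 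From the viewpoint of $\xi_2$ the roles swap: $\Gamma_3$ is part of the zero-section, $(r_2,\phi^2)$ are fiber coordinates, and $(r_1,\phi^1)$ are base coordinates on the northern hemisphere $D_2^+$. Thus $\Omega_1\cap\Omega_2$ simultaneously represents the trivializations $D_1^-\times D^2$ and $D_2^+\times D^2$, with the roles of fiber and base interchanged — which is exactly the identification $(z,v)\sim(v,z)$ used in Section \ref{sec2} to define $\mathcal{P}(\xi_1,\xi_2)$.

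With the identification in place, gluing $\xi_1$ and $\xi_2$ along this common $B^4$ by fiber-base interchange produces $\mathcal{P}(\xi_1,\xi_2)$, and its projection to the orbit space is precisely $\Omega_1\cup\Omega_2$. The boundary computation is then immediate from the material recalled in Section \ref{sec2}: plumbing two disc bundles over $S^2$ with zero-section self-intersection numbers $-k_1$ and $-k_2$ yields a simply connected $4$-manifold whose boundary is determined by the continued fraction $k_1-1/k_2=(k_1k_2-1)/k_2$, namely the lens space $L(k_1k_2-1,k_2)$. I expect the main subtlety to lie in the middle paragraph, in verifying that after the normalization of rod structures the fiber-base interchange in the orbit space literally matches the plumbing identification of Section \ref{sec2}; once this geometric bookkeeping is in place, the remaining assertions follow from Lemma \ref{lemma1} and the general plumbing formulas already established.
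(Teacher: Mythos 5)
Your argument is correct and follows essentially the same route as the paper: the paper's justification (the discussion surrounding Figure \ref{Plumbing}) likewise applies Lemma \ref{lemma1} to the two triples, identifies the preimage of $\Omega_1\cap\Omega_2$ with $B^4$ via \eqref{det condition}, normalizes the rod structures of $\Gamma_2,\Gamma_3$ to $(1,0),(0,1)$, and observes that the coordinates $(r_1,\phi^1,r_2,\phi^2)$ exhibit the fiber--base interchange defining $\mathcal{P}(\xi_1,\xi_2)$, with the boundary $L(k_1k_2-1,k_2)$ coming from the continued-fraction computation of Section \ref{sec2}. No substantive differences to report.
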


We are now in a position to establish the decomposition of the domain of outer communication as stated in the Introduction.

\begin{proof}[Proof of Theorem \ref{thm1}]
Within the rod structure of the orbit space $M^4/U(1)^2$, let $J$ denote the number of consecutive sequences of axis rods consisting of more than two rods. Any two of the consecutive sequences are separated by either a horizon rod or a horizon puncture. Label these by $\{\Gamma_{i,j}\}_{i=0}^{I_j+1}$, $j=1,\ldots, J$ where $I_j+2$ is the length of each sequence. According to Lemma \ref{lemma1} each of the rods $\Gamma_{i,j}$, $i=1,\ldots,I_j$ gives rise to a twisted disc bundle $\xi_{i,j}$ over $S^2$. Then by repeatedly applying Lemma \ref{lemma2}, we find that each consecutive sequence of axis rods gives rise to a plumbing $\mathcal{P}(\xi_{1,j},\cdots,\xi_{I_j,j})$ of disc bundles on $S^2$ within the Cauchy surface $M^4$. Each of these plumbing constructions may be represented in the orbit space as the region bounded between a semi-circle enclosing the axis rods of the sequence, see Figure \ref{orbit}. This gives the first piece of the decomposition in \eqref{1}.

If a single axis rod is bounded on both sides by a horizon rod/puncture, then the two bounding horizon regions (indicated by shaded rectangles the figure) will be separated by a white rectangular region in the orbit space with boundary consisting of a semi-circle beginning and ending on the same axis rod. Such a domain in the orbit space corresponds in the 4-manifold to the topology $D^2\times S^1\times [0,1]$ labeled by $C^4$. These give rise to the second portion of the desired decomposition.

Consider now the $N_2$ sequences of two consecutive axis rods which are bounded on either side by a horizon rod, horizon puncture, or the asymptotic end. For each of these two rod sequences, a semi-circle in the orbit space connecting the two encloses a region which is homeomorphic to $B^4$ in $M^4$. These 4-balls make up the third piece of the decomposition \eqref{1}.

Next, portions of the orbit space semi-circles associated with the first three pieces of the decomposition \eqref{1}, together with portions of horizon semi-circles, may be connected to form a single large semi-circle $\mathcal{C}$ connecting the two semi-infinite rods and enclosing all finite rods, as shown in Figure \ref{orbit}. Within the region $\Omega$ enclosed by $\mathcal{C}$ and the $z$-axis, there are regions enclosed by semi-circles and containing the axis rods and axis punctures. These regions, which are shaded in Figure \ref{orbit}, are topologically not part of the domain of outer communication. In the 4-manifold they represent the product of an interval with a horizon cross-sectional component, and therefore removing them does not change the topology of the DOC. The complement of $\Omega$ in the orbit space may be foliated by curves homologous to $\mathcal{C}$. Since $\mathcal{C}$ represents either $S^3$, $S^1\times S^2$, or $L(p,q)$ inside $M^4$, this foliated region coincides with $M_{\text{end}}^4$ as described in Theorem \ref{thm1}. This gives the last piece of the decomposition \eqref{1}.
Lastly, formula \eqref{1.1} follows immediately from Lemma \ref{lemma1}.
\end{proof}

\begin{figure}
\includegraphics{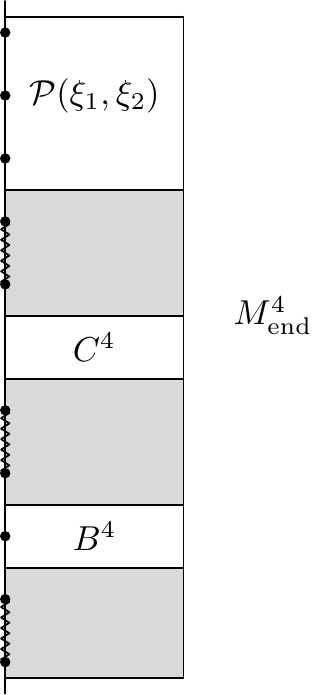}
\caption{Decomposition of orbit space}  \label{orbit}
\end{figure}

We now seek to find a canonical way to fill in the horizons and cap off the infinity by appropriate compact simply connected 4-manifolds with a single component boundary. There are three possible boundary types that are needed for this procedure, namely the sphere $S^3$, the ring $S^1\times S^2$, and a lens $L(p,q)$. Since all three arise via plumbing of disc bundles on $S^2$, and the plumbing construction has a naturally associated rod structure, we are motivated to take this approach. Consider a horizon rod or puncture which is bounded between two axis rods having rod structures $(m,n)$ and $(u,v)$. By applying a $SL(2,\mathbb{Z})$ transformation, that is a change of coordinates in $U(1)^2$, we may assume that $(m,n)=(1,0)$ and $(u,v)=(-q,p)$. The horizon then has the topology of $L(p,q)$. If $q=0$ then this is a ring $S^1\times S^2$ and the fill-in is simply $S^2\times D^2$, which is the trivial disc bundle over $S^2$. So assume that $p > q\neq 0$ and compute the continued fraction
\begin{equation}
\frac{p}{q} = k_1- \cfrac{1}{k_2-\cfrac{1}{k_3-\cfrac{1}{{\cdots} - \frac{1}{k_\ell}}}}.
\end{equation}
with $k_i> 0$.
Each integer $-k_i$ represents the zero-section self-intersection number of a disc bundle $\xi_i$ on $S^2$, and according to Proposition \ref{prop2} these may be plumbed together to form a compact simply connected 4-manifold $\mathcal{P}(\xi_1,\cdots,\xi_{\ell})$ having a single component boundary of topology $L(p,q)$. In the notation of Section \ref{sec2},  setting $\partial_t=\partial_{\phi^1}$ and $\partial_s=\partial_{\phi^2}$ shows that each disc bundle $\xi_i$ is associated to rod $\Gamma_i$ in the orbit space $\mathcal{P}(\xi_1,\cdots,\xi_{\ell})/U(1)^2$ having a rod structure determined by the self-intersection numbers. In particular, we obtain a sequence of rod structures
\begin{equation}
(1,0), (0,1), (-1, k_1), (-k_2, k_1k_2-1), \cdots , (q, p),
\end{equation}
where $\xi_1$ is paired with rod structure $(0,1)$, $\xi_2$ is paired with $(-1,k_1)$ and so on. Since the first and last rod structure agree with those bounding the original horizon rod/puncture, this sequence of rod structures may be inserted in place of the horizon rod/puncture to create an expanded version of the rod structures for the domain of outer communication. This means that the $U(1)^2$-action associated with the plumbing construction coincides with the symmetry action on the horizon cross-section. This process is equivalent to gluing the plumbing construction in to fill the horizon, or alternatively filling in the shaded regions in Figure \ref{orbit}.

The same process of filling in a horizon may also be applied to capping off the asymptotic end. The two semi-infinite rods within the rod structure of the DOC play the role of axis rods which bound a horizon rod/puncture. In particular, for an asymptotically flat end represented by $(1, 0)$ and $(0, 1)$ semi-infinite rods, the act of capping is described in Section \ref{spherical-fill-ins}.

\begin{lemma}\label{lemma3}
For each horizon rod, horizon puncture, or asymptotic end, there exists a natural choice of a compact simply connected 4-manifold with single component boundary which fills in the horizon or caps off the infinity.
\end{lemma}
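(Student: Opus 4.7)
The plan is to treat each boundary type ($S^3$, $S^1\times S^2$, and $L(p,q)$) by constructing an explicit plumbing of disc bundles whose rod structure matches the data of the neighboring axes, and then to verify that the resulting $4$-manifold has the required topological properties.

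First I would normalize coordinates. Given a horizon rod or puncture bounded by two axis rods with rod structures $(m,n)$ and $(u,v)$, the determinant condition \eqref{det condition} at the limiting corners (or equivalently the nondegeneracy of the horizon) implies $\begin{vmatrix} m & u \\ n & v \end{vmatrix} = \pm p$ for some integer $p\geq 0$ determining the horizon topology. After applying an $SL(2,\mathbb{Z})$ change of coordinates on $U(1)^2$, I may assume the bounding rod structures are $(1,0)$ and $(-q,p)$ with $0\leq q<p$ (or $p=0$, $q=1$ for the ring case). The case $p=0$ yields an $S^1\times S^2$ boundary and is filled in by the trivial disc bundle $S^2\times D^2$; the case $p=1$, $q=0$ yields $S^3$ and is filled in by $B^4$, which may be viewed as the degenerate plumbing of a single disc bundle $\xi$ with self-intersection $-1$ after an appropriate blow-down, or equivalently as the trivial filling.

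For the generic case $p>q\geq 1$, I would invoke the unique negative continued fraction expansion
\begin{equation}
\frac{p}{q} = k_1 - \cfrac{1}{k_2 - \cfrac{1}{k_3 - \cfrac{1}{\cdots - \frac{1}{k_\ell}}}}, \qquad k_i\geq 2,\; k_\ell>1,
\end{equation}
and assemble the disc bundles $\xi_1,\ldots,\xi_\ell$ with zero-section self-intersection numbers $-k_1,\ldots,-k_\ell$ into the plumbed manifold $\mathcal{P}(\xi_1,\cdots,\xi_\ell)$. By the analysis of Section \ref{sec2} (Proposition \ref{prop1}) this is a simply connected compact $4$-manifold with single-component boundary homeomorphic to $L(p,q)$, matching the required horizon topology. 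Simple connectivity is preserved at every plumbing step by van Kampen's theorem, and the boundary sum collapses to a single lens space as computed in the previous section.

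The final step is to verify compatibility of the rod structures, which I expect to be the main point requiring care. Setting $\partial_t=\partial_{\phi^1}$ and $\partial_s=\partial_{\phi^2}$, Proposition \ref{prop1} identifies each disc bundle $\xi_i$ with an axis rod $\Gamma_i$ in the orbit space $\mathcal{P}(\xi_1,\cdots,\xi_\ell)/U(1)^2$, and reads off the associated rod structure inductively from the recursion \eqref{a}--\eqref{c}. I would verify by induction on $i$ that the resulting sequence of rod structures is exactly
\begin{equation}
(1,0),\; (0,1),\; (-1,k_1),\; (-k_2,\,k_1 k_2-1),\;\ldots,\;(q,p),
\end{equation}
so that the first and last entries coincide with the prescribed bounding data. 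This ensures that the plumbed fill-in can be glued onto the original DOC without altering the $U(1)^2$-action at the interface, producing a canonically determined fill-in. For capping off an asymptotic end, the same procedure applies verbatim with the two semi-infinite rods playing the role of the bounding axis rods; the three canonical asymptotic types ($S^3$, $S^1\times S^2$, $L(p,q)$) correspond respectively to the three cases $(p,q)=(1,0)$, $(0,1)$, and $p>q\geq 1$ above, as will be detailed in the asymptotically flat case in Section \ref{spherical-fill-ins}. The main subtlety is checking that the continued fraction algorithm terminates and produces the stated last rod structure $(q,p)$; this follows from the standard correspondence between negative continued fractions and $SL(2,\mathbb{Z})$ factorizations of the matrix $\begin{pmatrix} 1 & -q \\ 0 & p \end{pmatrix}$.
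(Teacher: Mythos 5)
Your proposal is correct and follows essentially the same route as the paper: normalize the bounding rod structures to $(1,0)$ and $(-q,p)$ by an $SL(2,\mathbb{Z})$ change of $U(1)^2$ coordinates, handle the $S^1\times S^2$ and $S^3$ cases with $S^2\times D^2$ and $B^4$, and otherwise plumb the disc bundles dictated by the continued fraction expansion of $p/q$, checking via the Section \ref{sec2} recursion that the induced rod structure sequence $(1,0),(0,1),(-1,k_1),\ldots,(q,p)$ matches the bounding data so the fill-in glues compatibly with the torus action; the same argument caps the asymptotic end using the two semi-infinite rods. The only differences are cosmetic (insisting on the Hirzebruch--Jung convention $k_i\geq 2$ and citing Proposition \ref{prop1} rather than Proposition \ref{prop2}), not a different method.
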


We are now in a position to establish the classification of the compactified Cauchy surface.

\begin{proof}[Proof of Theorem \ref{thm2}]
Consider the Cauchy surface $M^4$ for the domain of outer communication $\mathcal{M}^5$. By Lemma \ref{lemma3} there exist simply connected fill-ins $\tilde{M}^4_h$ for the horizons, and a simply connected cap $\tilde{M}^4_{\text{end}}$. van Kampen's theorem shows that after inserting the fill-ins and cap, the resulting compactified manifold $\tilde{M}^4$ is simply connected. Moreover according to the construction of the fill-ins and caps, $\tilde{M}^4$ comes equipped with an effective $U(1)^2$-action. According to \cite{OrlikRaymond1} the orbit space $\tilde{M}^4/U(1)^2$ is a 2-dimensional disc, such that the boundary circle is divided into a sequence of rods with rod structures detailing how the action degenerates. This sequence of rod structures corresponds to that of $M^4/U(1)^2$, with additional rods added in place of horizon rods/punctures and the asymptotic end which may be computed from the proof of Lemma \ref{lemma3}. Furthermore the results (pages 553 and 554) of \cite{OrlikRaymond1} show that $M^4$ must then be either $S^4$, or a finite connected sum of $S^2\times S^2$, $\mathbb{CP}^2$, and $\overline{\mathbb{CP}}^2$. Since $\mathbb{CP}^2 \# S^2\times S^2 \cong \mathbb{CP}^2 \# \overline{\mathbb{CP}}^2 \# \mathbb{CP}^2$, the connected sum decomposition of $\tilde{M}^4$ may be expressed solely in terms of $S^2\times S^2$ or in terms of $\mathbb{CP}^2$ and $\overline{\mathbb{CP}}^2$.

An alternative approach to obtaining this connected sum decomposition of $\tilde{M}^4$ is to apply the classification theorem of Freedman \cite{FreedmanQuinn} and work of Donaldson \cite{DonaldsonKronheimer}. The desired result follows immediately, except for the possibility of having the $E_8$-manifold present as a component in the connected sum. However, such components can be ruled out as in \cite{HollandsHollandIshibashi,HollandsYazadjiev1}.
\end{proof}

\section{Examples}
\label{sec4} \setcounter{equation}{0}
\setcounter{section}{4}

\begin{figure}
\centering
  \hspace*{\fill}%
  \subcaptionbox{~}{\includegraphics{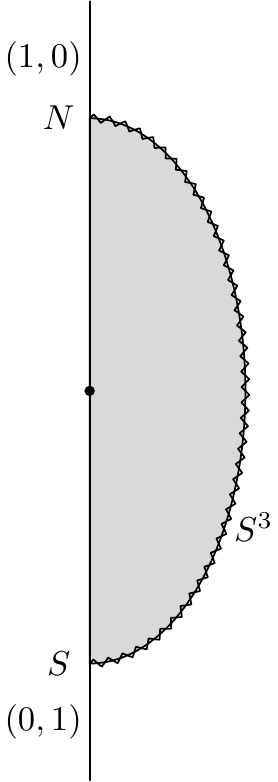}}\hspace{\fill}%
  \subcaptionbox{~}{\includegraphics{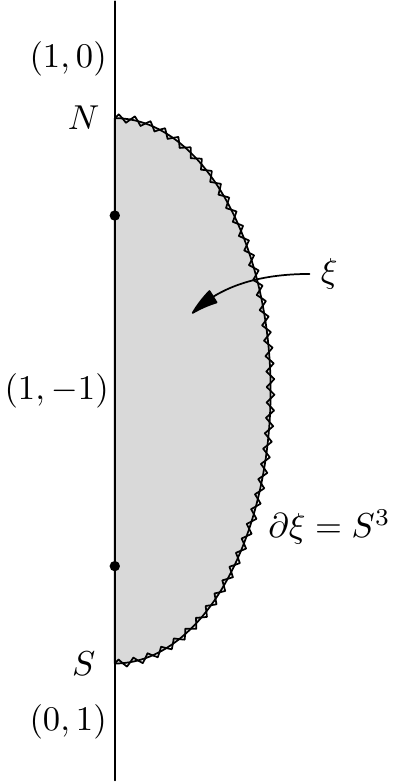}}\hspace{\fill}%
  \subcaptionbox{~}{\includegraphics{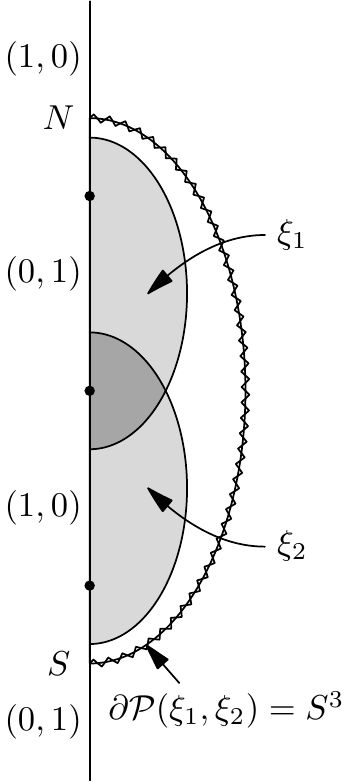}}%
  \hspace*{\fill}%
  \caption{Three fill-ins}\label{fig5}
\end{figure}

In this section we consider basic examples of domains of outer communication having horizons of three different topological types. The methods of Theorems \ref{thm1} and \ref{thm2} are applied to classify the DOCs when an asymptotically flat end is present. In addition, we show that other methods for filling in the horizon produce different topologies for the compactified Cauchy slice. The choices of fill-in made in this paper may be described as canonical in the sense that they are systematized, and offer the most elementary option which is simply connected.

\subsection{Spherical Horizon}\label{spherical-fill-ins}
Consider an asymptotically flat DOC with $S^3$ horizon and having the typical rod structure $\{(1,0),(0,0),(0,1)\}$. An example of such a vacuum black hole is given by the Myers-Perry solution \cite{MyersPerry}.
The horizon fill-in is given by $\tilde{M}^4_h=B^4$. In terms of the rod structure this fill-in entails removing the horizon rod $(0,0)$ to obtain $\{(1,0),(0,1)\}$. The cap at infinity is the same, namely $\tilde{M}^{4}_{\text{end}}=B^4$. This yields the compactified manifold $\tilde{M}^4= S^4$. The Cauchy surface of the DOC is then $M^4=\left(S^4\setminus\cup_{i=1}^2 B_i^4\right)\cup M^4_{\text{end}}=\left(S^4\setminus B^4 \right) \#\mathbb{R}^4$. See Figure \ref{fig5}~(a). Here as in all figures to follow, squiggly curves represent the horizon.

A noncanonical fill-in for the $S^3$ horizon is to use the twisted disc bundle $\xi$ with self-intersection number $-1$, as shown in Figure \ref{fig5}~(b).

In Figure \ref{fig5}~(c) another noncanonical possibility is displayed in which the horizon is filled in with $\tilde{M}^4_h=\mathcal{P}(\xi_1,\xi_2)$, the plumbing of two trivial disc bundles over $S^2$. Recall that according to the discussion in Section \ref{sec2} the boundary $\partial\mathcal{P}(\xi_1,\xi_2)=S^3$. This entails replacing the horizon rod with the sequence of rod structures $(0,1),(1,0)$, to obtain the expanded or enhanced rod structure $\{(1,0),(0,1),(1,0),(0,1)\}$. The compactified Cauchy surface of the DOC is then $\tilde{M}^4= S^2\times S^2$, which may be computed from the chart in \cite[pg. 552]{OrlikRaymond1}. In this case the Cauchy slice of the DOC has topology $M^4=\left(S^2\times S^2\setminus\mathcal{P}(\xi_1,\xi_2)\right)\#\mathbb{R}^4$.

\subsection{Ring Horizon}

\begin{figure}
\centering
  \hspace*{\fill}%
  \subcaptionbox{~}{\includegraphics{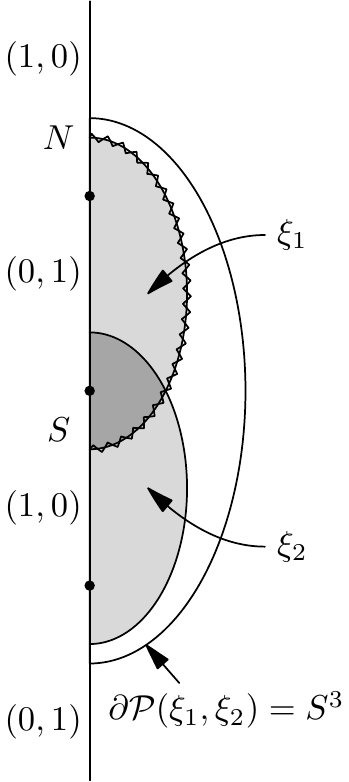}}\hfill%
  \subcaptionbox{~}{\includegraphics[align=c,width=.5\textwidth]{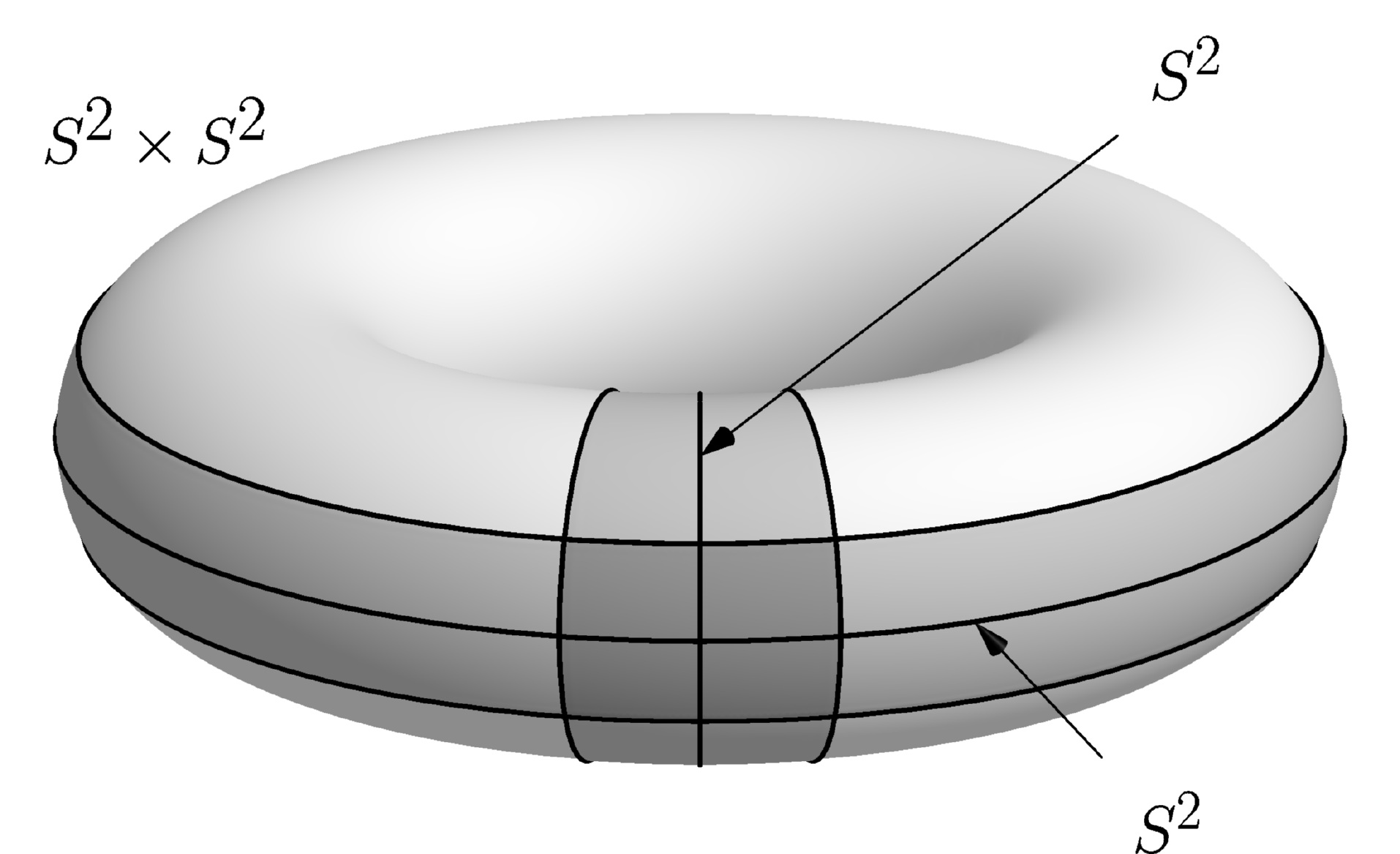}}
  \hspace*{\fill}%
  \caption{Standard fill-in for black ring} \label{fig:ring1}
\end{figure}

Consider an asymptotically flat black ring having the set of rod structures
\begin{equation}
\{(1,0),(0,0),(1,0),(0,1)\}.
\end{equation}
Explicit solutions to the vacuum equations having this rod structure are given by the black rings of Emparan-Reall and Pomeransky-Senkov \cite{EmparanReall,PomeranskySenkov}.
Following the prescription of Theorem \ref{thm2}, the horizon fill-in is $\tilde{M}^4_h=S^2\times D^2$. With regards to the rod structure, this fill-in requires replacing the horizon rod structure with a rod of structure $(0,1)$ to obtain the expanded rod structure
\begin{equation}
\{(1,0),(0,1),(1,0),(0,1)\}.
\end{equation}
At infinity the cap is again $\tilde{M}^4_{\text{end}}=B^4$. The compactified manifold $\tilde{M}^4$ then has a disc orbit space with rod structure given by the expanded sequence, and this corresponds to $\tilde{M}^4= S^2\times S^2$. Therefore the Cauchy slice of the DOC is $M^4=\left(S^2\times S^2 \setminus S^2\times D^2\right)\# \mathbb{R}^4$. See Figure \ref{fig:ring1}.

There is an alternative noncanonical way to fill in the ring horizon. Namely, choose the fill-in to be $\tilde{M}^4_1=S^1\times D^3$, which is not simply connected. This fill-in has previously been examined in \cite[pg. 652]{HollandsYazadjiev1}, and the compactified Cauchy slice is then $\tilde{M}^4=S^4$. From this we find that the Cauchy slice of the DOC has the topology $M^4=\left(S^4\setminus S^1\times D^3\right) \#\mathbb{R}^4=S^2\times D^2 \#\mathbb{R}^4$. See Figure \ref{fig:ring2}. Note that in~(b) of Figure~\ref{fig:ring1} the 2-dimensional torus represents $S^2\times S^2$ with two dimensions suppressed, and the ambient $\R^3$ is to be ignored. On the other hand, in~(b) of Figure~\ref{fig:ring2} the solid torus represents $S^1\times D^3$, with one dimension along the vertical $\R^2$ axis suppressed, and the ambient space is the DOC, i.e.\ after compactification $S^4$.

\begin{figure}
\centering
  \hspace*{\fill}%
  \subcaptionbox{~}{\includegraphics{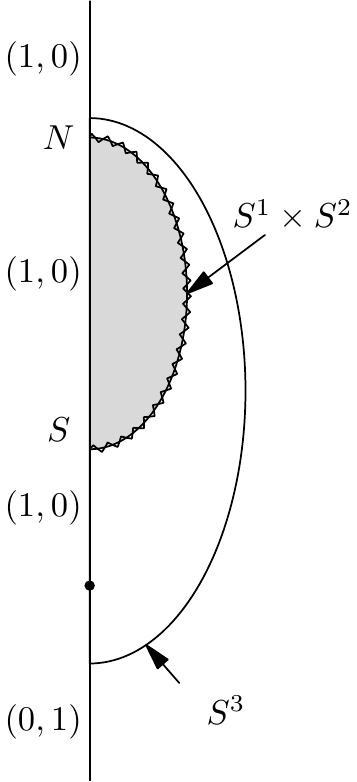}}\hfill%
  \subcaptionbox{~}{\includegraphics[align=c,width=.56\textwidth]{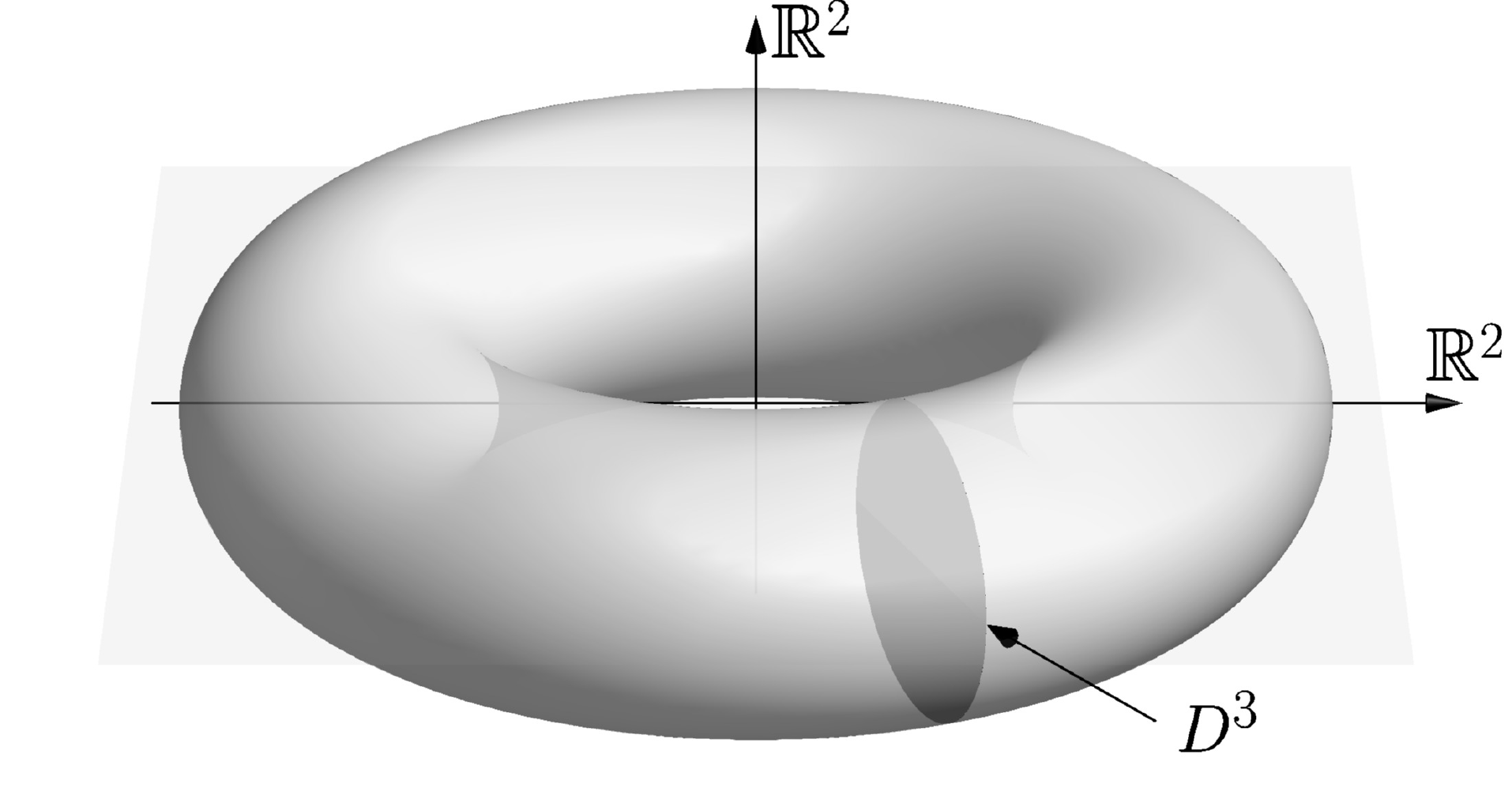}}%
  \hspace*{\fill}%
\caption{Non standard fill-in for black ring}  \label{fig:ring2}
\end{figure}

\subsection{Lens Horizon}
\label{lh}

Consider now the asymptotically flat black lens having rod structures
\begin{equation}
\{(1,0),(0,0),(-1,p),(0,1)\}.
\end{equation}
Following the proof of Theorem \ref{thm2} leads to the horizon fill-in $\tilde{M}^4_h=\xi$, that is the twisted disc bundle over $S^2$ with zero-section self-intersection number $-p$.  This entails replacing the horizon rod with a rod of structure $(0,1)$ to obtain the expanded rod structure
\begin{equation}
\{(1,0),(0,1),(-1,p),(0,1)\},
\end{equation}
and as before the cap at infinity is $\tilde{M}^4_{\text{end}}=B^4$. The compactified manifold $\tilde{M}^4$ then has a disc orbit space with rod structure given by the expanded sequence. A computation \cite{OrlikRaymond1} shows that this corresponds to
\begin{equation}
\tilde{M}^4=
\begin{cases}
      S^2\times S^2 & p=\text{even}, \\
      \mathbb{CP}^2\#\overline{\mathbb{CP}}^2 & p=\text{odd}>1.
   \end{cases}
\end{equation}
Therefore the Cauchy slice of the DOC is given by $M^4=\left(\tilde{M}^4 \setminus \xi\right)\# \mathbb{R}^4$. In particular, the black lens ($\mathbb{RP}^3$) solution of 5D minimal supergravity constructed by Kunduri and Lucietti \cite{LuciettiKunduri} has the Cauchy slice topology $\left(S^2\times S^2 \setminus \xi\right)\# \mathbb{R}^4$.

\subsection{Multiple Black Holes}

In 5-dimensions it is possible to have stationary vacuum black holes with multi-component horizons. For instance, the black Saturn solution
\cite{ElvangFigueras} has an $S^3$ horizon component surrounded by a ring $S^1\times S^2$ component. The associated rod structure is given by
\begin{equation}
\{(1,0),(0,0),(0,1),(0,0),(0,1)\}.
\end{equation}
In order to compactify the DOC following the procedure outlined above, we may use a ball $B^4$ for the spherical component and the trivial disc bundle $D^2\times S^2$ for the ring component. The resulting extended rod structure then becomes
\begin{equation}
\{(1,0),(0,1),(1,0),(0,1)\},
\end{equation}
which corresponds to $\tilde{M}^4=S^2\times S^2$. Thus the topology of a Cauchy slice of the DOC for the black Saturn solution is $\left(S^2\times S^2 \setminus (B^4\cup D^2\times S^2)\right)\# \mathbb{R}^4$.

Another asymptotically flat multi-black hole solution of the vacuum equations involves two concentric singly spinning rings rotating in the same plane. This is the so-called dipole black ring (or di-rings) constructed in \cite{EvslinKrishnan,IguchiMishima}. Its rod structure is
\begin{equation}
\{(1,0),(0,0),(1,0),(0,1),(0,0),(0,1)\}.
\end{equation}
By filling in the two ring horizons with the trivial disc bundle over $S^2$ the resulting extended rod structure sequence takes the form
\begin{equation}
\{(1,0),(0,1),(1,0),(0,1),(1,0),(0,1)\},
\end{equation}
which gives rise to the compactified manifold $\tilde{M}^4=S^2\times S^2\# S^2\times S^2$. Hence the Cauchy slice toplogy of the DOC is $\left(S^2\times S^2\# S^2\times S^2 \setminus (D^2\times S^2\cup D^2\times S^2)\right)\# \mathbb{R}^4$.

\subsection{Nonuniqueness of DOC}

At the end of Section \ref{sec1}, we brought up the question of whether the DOC is uniquely determined by the horizon topology and the topology of the asymptotic end. To illustrate the negative answer to this question, here examples of two different asymptotically flat DOCs will be given, both of which have a single component horizon cross-section with $\mathbb{RP}^3$ topology. In order to describe the bi-axisymmetric solutions to the vacuum equations, it suffices to provide the sequences of rod structures associated with the orbit space, namely
\begin{equation}\label{r1}
\{(1,0),(0,0),(-1,2),(0,1)\},
\end{equation}
\begin{equation}\label{r2}
\{(1,0),(0,0),(-1,2),(0,1),(1,0), (0,1)\}.
\end{equation}
Observe that both sets of rod structures begin and end with $(1,0),(0,1)$ indicating that the asymptotic end is of the form $\mathbb{R}_{+}\times S^3$, and both horizon rods are bounded between the axis rods $(1,0),(-1,2)$ signifying that the horizon topology is the lens space $L(2,1)=\mathbb{RP}^3$. The only difference between the two sequences is that addition of two axis rods in \eqref{r2} having rod structures $(0,1),(1,0)$. This adds two additional corners and changes the topology of the corresponding DOCs.

To see the differing topologies, fill in the horizon as in Section \ref{lh} with the twisted disc bundle $\tilde{M}^4_h=\xi$ over $S^2$ having zero-section self-intersection number $-2$. In terms of the rod structures this is equivalent to replacing the horizon rod structure with the rod structure $(0,1)$ to obtain the enhanced sequences
\begin{equation}
\{(1,0),(0,1),(-1,2),(0,1)\},
\end{equation}
\begin{equation}
\{(1,0),(0,1),(-1,2),(0,1),(1,0), (0,1)\}.
\end{equation}
After capping off the end the resulting compactified manifolds $\tilde{M}^4$
have topology $S^2\times S^2$ and $S^2\times S^2 \# S^2\times S^2$, respectively.
Therefore the two domains of outer communication are not homeomorphic.

\end{document}